\documentclass[11pt,noinfoline]{imsart}\setcounter{tocdepth}{2}

\RequirePackage[OT1]{fontenc}
\RequirePackage{amsthm,amsmath}
\RequirePackage[numbers]{natbib}
\RequirePackage[colorlinks,citecolor=blue,urlcolor=blue]{hyperref}

\usepackage{verbatim}
\usepackage{bbold}
\usepackage{float}

\allowdisplaybreaks

\usepackage{fullpage}
\usepackage{bbm}

\usepackage{layout}

\usepackage{dsfont}
\usepackage[mathscr]{eucal}
\usepackage[toc,page]{appendix}
\usepackage{mathrsfs}
\usepackage{color}
\usepackage{pifont}
\usepackage{bm}
\usepackage{latexsym}
\usepackage{amsfonts}
\usepackage{amssymb}
\usepackage{epsfig}
\usepackage{graphicx}
\usepackage{multirow}

\usepackage{caption}
\usepackage{subcaption}

\newtheorem{theorem}{Theorem}[section]

\newtheorem{lemma}[theorem]{Lemma}

\newtheorem{proposition}[theorem]{Proposition}
\newtheorem{remark}[theorem]{Remark}

\newtheorem{assumption}[theorem]{Assumption}

\usepackage{verbatim}

\usepackage{tikz}
\usetikzlibrary{fit,positioning,arrows,automata,calc}
\tikzset{
main/.style={circle, minimum size = 5mm, thick, draw =black!80, node distance = 10mm},
connect/.style={-latex, thick},
box/.style={rectangle, draw=black!100}
}

\usepackage{mathtools}
\usepackage{tabu}

\hypersetup{
colorlinks=true,       
linkcolor=blue,        
citecolor=blue,        
filecolor=magenta,     
urlcolor=blue        
}
\newcommand*\diff{\mathop{}\!\mathrm{d}}

\newcommand{\E}{\mathbb{E}}

\newcommand\norm[1]{\left\lVert#1\right\rVert}

\DeclareMathOperator*{\argmin}{argmin}

\begin{document}

\begin{frontmatter}

\title{{\large {Distribution-on-Distribution Regression via Optimal Transport Maps}}}

\runtitle{Distribution to Distribution Regression via Optimal Transport Maps}

\begin{aug}
\author{\fnms{Laya} \snm{Ghodrati}\ead[label=e1]{laya.ghodrati@epfl.ch}} \and
\author{\fnms{Victor M.} \snm{Panaretos}\ead[label=e2]{victor.panaretos@epfl.ch}}


\runauthor{L. Ghodrati \& V.M. Panaretos}

\affiliation{Ecole Polytechnique F\'ed\'erale de Lausanne}

\address{Institut de Math\'ematiques\\
Ecole Polytechnique F\'ed\'erale de Lausanne\\
\printead{e1}, \printead*{e2}}

\end{aug}

\begin{abstract}
We present a framework for performing regression when both covariate and response are probability distributions on a compact interval. Our regression model is based on the theory of optimal transportation and links the conditional Fr\'echet mean of the response to the covariate via an optimal transport map. We define a Fr\'echet-least-squares estimator of this regression map, and establish its consistency and rate of convergence to the true map, under both full and partial observation of the regression pairs. Computation of the estimator is shown to reduce to a standard convex optimisation problem, and thus our regression model can be implemented with ease. We illustrate our methodology using real and simulated data.
\end{abstract}

\begin{keyword}[class=AMS]
\kwd[Primary ]{62M, 15A99}
\kwd[; secondary ]{62M15, 60G17}
\end{keyword}

\begin{keyword}
\kwd{functional regression}
\kwd{random measure}
\kwd{optimal transport}
\kwd{Wasserstein metric}
\end{keyword}

\end{frontmatter}

{{ \footnotesize
\tableofcontents
}}

\section{Introduction}

Functional data analysis \citep{hsing2015theoretical} considers statistical inference problems whose sample and parameter spaces constitute function spaces. This framework encompasses data that are best viewed as realisations of random processes, and presents challenges arising from the infinite dimensionality of the function spaces, typically taken to be separable Hilbert spaces. On the other hand, non-Euclidean statistics \citep{patrangenaru2015nonparametric} treats inference problems whose sample and parameter spaces are finite dimensional manifolds. Such problems present with a different set of challenges, linked with the non-linearity of the corresponding spaces, which often arises due to non-linear constraints satisfied by the data/parameters.

When the data/parameters of interest are, in fact, probability distributions, one has a problem that is simultaneously functional and non-Euclidean: on the one hand the data can be seen as random processes, and on the other they satisfy non-linear constraints, such as positivity and integral constraints. Thus, the functional data analysis of probability distributions features interesting challenges stemming from this dual nature of the ambient space, for example the finite measurement of intrinsically infinite dimensional objects, and the lack of a linear structure which is crucial to basic statistical operations, such as averaging or, more generally, regression toward a mean. See \citet{petersen-review} for an excellent overview.

One approach to dealing with the non-linear nature of probability distributions is to apply a suitable transformation and map the problem back to a space with a linear structure \citep{kneip2001inference,delicado2011dimensionality,petersen2016functional,kokoszka2019forecasting}. A seemingly more natural approach is to embrace the intrinsic non-linearity, and to analyse the data in their native space, equipped with a canonical metric structure. In the case of probability distributions, the Wasserstein metric \citep{panaretos2019statistical,panaretos2020invitation} has been exhibited as a canonical choice \citep{panaretos2016amplitude}, primarily because it captures deformations, which are typically the main form of variation for probability distributions.

The case of inferring the Fr\'echet mean of a collection of random elements in the Wasserstein space is by now well understood \citep{panaretos2016amplitude,bigot2018upper,zemel2019frechet,gouic2019fast}. The deep links to convexity and the tangent space structure of the Wasserstein space play an important role in motivating and deriving the analysis of this case. The next step is to understand the notion of regression of one probability distribution on another. The first to do so were \citet{chen2020wasserstein}, and, independently, \citet{zhang2020wasserstein}, the latter paper focussing on autoregression. They used the tangent space structure to define a regression operation: using the log transform, the regressor and response are lifted to suitable tangent spaces, where a (linear) regression model is defined in a more familiar Hilbertian setting \citep{morris2015functional,hall2007methodology}. This allows the authors to use the well-developed toolbox of functional regression, and derive appropriate asymptotic theory. 

In this paper, we propose an alternative notion of distribution-on-distribution regression, following a different path. Rather than taking a geometrical approach, via the tangent bundle structure, we follow a shape-constraint approach, namely exploiting convexity. Our model is defined directly at the level of the probability distributions, and stipulates that the response distributions are related to the covariate distributions by means of an optimal transport map, and further deformational noise. A key advantage of this approach is its clean and transparent interpretation, since the regression operator can be interpreted \emph{pointwise} at the level of the original distributions, and its effect consists in mass transportation, or equivalently, quantile re-arrangement. Further to this, the approach requires minimal regularity conditions, and does not suffer from ill-posedness issues as inverse problems do. Finally, its computational implementation reduces to a standard convex optimisation problem. The usefulness of the approach is exhibited when revisiting the analysis of the mortality data of \citet{chen2020wasserstein}, where the approach is seen to lead to similar (if more expansive) qualitative conclusions, but with the advantage of an arguably improved interpretability.

\section{Background on Optimal Transport and Some Notation}{\label{Wasserstein}}
In order to define our regression model, we now provide some minimal background on optimal transport and Wasserstein distances, including some relevant notation. For more background see, e.g. \cite{panaretos2020invitation}. Let $\Omega\subseteq\mathbb{R}$ and $\mathcal{W}_2(\Omega)$ be the set of Borel probability measures on $\Omega$, with finite second moment. The 2-Wasserstein distance $W$ between $\mu,\nu \in \mathcal{W}_2(\Omega)$ is defined by 
$$d^2_{\mathcal{W}}(\nu,\mu):=\underset{\gamma \in \Gamma(\nu,\mu)}{\inf} \int_{\Omega} |x-y|^2 \diff \gamma(x,y),$$
where $\Gamma(\nu,\mu)$ is the set of couplings of $\mu$ and $\nu$, i.e. the set of Borel probability measures on $\Omega \times \Omega$ with marginals $\nu$ and $\mu$. It can be shown that $\mathcal{W}_2(\Omega)$ endowed with $d^2_{\mathcal{W}}$ is a metric space, which we simply call the Wasserstein space of distributions. A coupling $\gamma$ is deterministic if it is the joint distribution of $\{X, T(X)\}$ for some deterministic map $T: \Omega \to \Omega$, called an optimal transport map. In such a case, we write $\nu=T\#\mu$ and say that $T$ pushes $\mu$ forward to $\nu$, i.e. $\nu(B)=\mu\{T^{-1}(B)\}$ for any Borel set $B$. Occasionally we denote this as $T_{\mu\to\nu}$, for clarity.

When the source distribution $\mu$ is absolutely continuous with respect to the Lebesgue measure,  then the optimal plan is induced by a map $T$. When $d=1$, the map $T$ admits the explicit expression $T=F^{-1}_{\nu}\circ F_\mu$, where $F^{-1}_\nu$ is the quantile function of $\nu$, and $F_\mu$ is the cumulative distribution function of $\mu$. In addition

\begin{equation}\label{w1d}
  d_{\mathcal{W}}^2(\mu,\nu)=\int_0^1\big|F^{-1}_{\mu}(p) - F^{-1}_{\nu}(p)\big|^2 \diff p.
\end{equation}

A notion of average of probability distributions can be defined via the Fr\'echet mean with respect to the Wasserstein metric. Namely, let $\Lambda$ be a random measure on $\mathcal{W}_2(\Omega)$ with law $P$. A Fr\'echet mean of $\Lambda$ is a minimizer of the Fr\'echet functional
$$F(b)=\frac{1}{2}E d^2_{\mathcal{W}}(b,\Lambda)=\frac{1}{2}\int_{\mathcal{W}_2(\Omega)}  d^2_{\mathcal{W}}(b,\lambda)\diff P(\lambda)\quad b \in \mathcal{W}_2(\Omega).$$
The Fr\'echet functional can thus serve as a basis to define a sum-of-squares functional in the context of regression, and this will be done in the next section. 

We will occasionally use the fact that $\mathcal{W}_2(\mathbb{R})$ is flat in that for $\mu,\nu, b \in \mathcal{W}_2(\mathbb{R})$ it holds that
\begin{equation}\label{PC}
  d_{\mathcal{W}}(\mu,\nu)=\norm{T_{b \to \nu} - T_{b \to \mu}}_{L^2(b)},
\end{equation}
whenever the optimal maps involved are well-defined.

Finally, we will use the notation $a \lesssim b$ to indicate that there exists a positive constant $C$ for which $a\leq C b$ holds. The support of a function $f$ will be denoted by $\text{supp}(f)$ . And, for a measure $\mu$, we indicate the $L^p$ norm of a function $f:[0,1]\rightarrow \mathbb{R}$ with respect to $\mu$ as $\norm{f}_{L^p(\mu)}$.

\section{Distribution-on-Distribution Regression}{\label{Distribution-on-Distribution Regression}}

\subsection{Fr\'echet Functionals and Regression Operators}\label{regression_operators}

Let $(\mu,\nu)$ be a pair of random elements in $\mathcal{W}_2(\Omega) \times \mathcal{W}_2(\Omega)$ with joint distribution $P$. Then, similar to a standard nonparametric regression model, we can define a regression operator $\Gamma: \mathcal{W}_2(\Omega) \rightarrow \mathcal{W}_2(\Omega)$ as the minimizer of the conditional Fr\'echet functional, viewed as a function of $\mu$,

$$
  \argmin_b \int_{\mathcal{W}_2(\Omega)} d^2_{\mathcal{W}}(b,\nu)\diff P(\nu\,|\,\mu)=\Gamma(\mu)  
$$
assuming that for any $\mu$, the Fr\'echet mean of the conditional law $P(\cdot \,|\, \mu)$ of $\nu$ given $\mu$ is unique 
, which can be enforced by means of regularity assumptions on the pair $(\mu,\nu)$.

The difference between the above formulation and the standard regression formulation is that we have replaced the notion of expectation with a Wasserstein-Fr\'echet mean, an approach termed as ``Fr\'echet Regression" by  \citet{petersen2019frechet}.  
Postulating a specific form on the regression operator $\Gamma^*$ amounts to defining a certain type of regression model. If $\Gamma$ is left unconstrained, except for possessing some degree of regularity, then we would speak of a nonparametric regression model. However, assumptions on $\Gamma$ are needed to ensure its identifiability, and simply assuming it is regular will not suffice in this more general context. 

For instance, the approach of \citet{chen2020wasserstein} and \citet{zhang2020wasserstein} consists in constraining $\Gamma$ to be in a certain sense linear, in that it can be represented as a linear operator at the level of the tangent bundle. Identifiability, and indeed fitting and asymptotic theory, can then be derived by appealing to the inclusion of the tangent spaces in Hilbert spaces. 

Here we impose a different constraint on $\Gamma$, and consequently define a different notion of regression. Namely we impose a \emph{shape constraint}, by assuming that $\Gamma(\mu)=T\#\mu$, where $T$ is an increasing map. This is developed in the next section which postulates a regression model on the pair $(\mu,\nu)$ that guarantees the uniqueness of the conditional Fr\'echet mean $\Gamma(\mu)$ of $\nu$ given $\mu$, and imposes mild conditions ensuring the identifiability of $\Gamma$.

\subsection{The Regression Model and The Fr\'echet-Least-Squares Estimator}
{\label{The Model and The Estimator}}

Henceforth, we will take the domain $\Omega$ to be a compact interval of $\mathbb{R}$. Let $\{(\mu_i,\nu_i)\}_{i=1}^{N}$ be an independent collection of regressor/response pairs in $\mathcal{W}_2(\Omega)\times \mathcal{W}_2(\Omega)$. Motivated by the discussion in the previous paragraph, we define the regression model
\begin{equation}{\label{model}}
   \nu_{i}=T_{\epsilon_i}\#(T_0\#\mu_i),  \quad  \{\mu_i,\nu_i\}_{i=1}^N,
\end{equation}
where $T_0:\Omega \to \mathbb{R}$ is an unknown optimal map and $\{T_{\epsilon_i}\}_{i=1}^{N}$ is a collection of independent and identically distributed random optimal maps satisfying $E\{T_{\epsilon_i}(x)\}=x$ almost everywhere on $\Omega$. These represent the ``noise" in our model. The regression task will be to estimate the unknown $T_0$ from the observations $\{\mu_i,\nu_i\}_{i=1}^N$. To be able to do so, we need to ensure that $T_0$ is identifiable, and for this we now introduce some conditions.

In the spirit of Section \ref{regression_operators}, let  $P$ be the probability law induced on $\mathcal{W}_2(\Omega) \times \mathcal{W}_2(\Omega)$ by model \eqref{model}. We denote by $P_{M}$ and $P_N$ the marginal distributions induced on the typical regressor $\mu$ and the typical response $\nu$, respectively.

\begin{assumption}{\label{absCont}}
Let $\mu$ be a measure in the support of $P_M$. Then $\mu$ is absolutely continuous with respect to Lebesgue measure on $\Omega$.
\end{assumption}

Denote by $Q$ the measure that is linear average of $P_M$, i.e. $Q(A)=\int_{\mathcal{W}_2(\Omega)} \mu(A)\diff P_M(\mu)$. We also denote by $Q_N$ the empirical counterpart of $Q$, namely $Q_N(A)=\frac{1}{N} \sum_{i=1}^N \mu_i(A)$, where $\{\mu_i\}$ are independent random measures with law $P_M$. Note that all $\mu$ in the support of $P_M$ are dominated by the measure $Q$, i.e. $\mu \ll Q$ almost surely.

Define the parameter set of optimal transport maps $\mathcal{T}$ as:
$$\mathcal{T}:=\{T :\Omega \to \Omega: 0\leq T'(x) {<\infty} \text{ for } Q \text{-almost every } x \in \Omega \}.$$

Implicit in the definition of $\mathcal{T}$ is that its elements are assumed differentiable $Q$-a.e. In the presence of Assumption \ref{absCont}, the $Q$-a.e. existence of $T'$ is automatically guaranteed, since Lebesgue's theorem on the differentiation of monotone functions states that a monotone function automatically has a derivative Lebesgue almost everywhere in the interior of $\Omega$, and Assumption \ref{absCont} implies that $Q$ is dominated by Lebesgue measure.\\ 

\noindent We will also assume:
\begin{assumption}{\label{assumpMaps}}
The model (\ref{model}) is induced by a map $T_0$ and random maps $T_{\epsilon}$ that are of class $\mathcal{T}$.
\end{assumption}

\noindent With these assumptions in place, we can now establish identifiability:

\begin{theorem}{\label{identifiability}}
Assume that the law $P$ induced by model (\ref{model}) satisfies Assumptions \ref{absCont} and \ref{assumpMaps}. Then, the regressor operator $\Gamma(\mu)=T_0\#\mu$ in model \eqref{model} is identifiable over the parameter class $\mathcal{T}$ in the $L^2(Q)$ topology. Specifically,  for any $T\in\mathcal{T}$ such that $\|T-T_0\|_{L^2(Q)}>0$, it holds that
$$M(T)>M(T_0),$$  
where 
\begin{equation}{\label{population-functional}}
 M(T):= \frac{1}{2}\int_{\mathcal{W}_2(\Omega)\times \mathcal{W}_2(\Omega)} d^2_{\mathcal{W}}(T\# \mu,\nu) \diff P(\mu,\nu).
 \end{equation}

\end{theorem}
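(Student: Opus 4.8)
The plan is to exploit the one-dimensional flatness of the Wasserstein space, equation \eqref{PC}, to convert the Wasserstein distances appearing in $M$ into ordinary $L^2$ distances between transport maps, after which the problem collapses to a pointwise bias--variance decomposition.

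First I would fix a regressor $\mu$ (absolutely continuous by Assumption \ref{absCont}) and a realisation of the noise map $T_\epsilon$, so that $\nu=(T_\epsilon\circ T_0)\#\mu$. Since every element of $\mathcal{T}$ is non-decreasing, both $T$ and the composition $T_\epsilon\circ T_0$ are monotone maps pushing $\mu$ forward to $T\#\mu$ and to $\nu$ respectively; in one dimension a non-decreasing map transporting an absolutely continuous measure is automatically the optimal map, by the explicit characterisation $T=F^{-1}_\nu\circ F_\mu$ recalled in Section \ref{Wasserstein}. Hence, taking $b=\mu$ as the base point in \eqref{PC},
$$d^2_{\mathcal{W}}(T\#\mu,\nu)=\norm{T-T_\epsilon\circ T_0}^2_{L^2(\mu)}=\int_\Omega|T(x)-T_\epsilon(T_0(x))|^2\diff\mu(x).$$

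Next I would integrate over the noise. For fixed $x$, writing $Z=T_\epsilon(T_0(x))$, the model assumption $E\{T_\epsilon(y)\}=y$ gives $EZ=T_0(x)$, so the elementary identity $E|a-Z|^2=|a-EZ|^2+\Var(Z)$ yields
$$E_\epsilon|T(x)-Z|^2=|T(x)-T_0(x)|^2+\Var_\epsilon(Z),$$
and crucially the variance term does not involve $T$. Integrating against $\diff\mu(x)$, taking expectation over $\mu\sim P_M$, and interchanging the order of integration (legitimate because all maps take values in the compact set $\Omega$, so every integrand is bounded), the definition $Q(A)=\int\mu(A)\diff P_M(\mu)$ contracts the double integral into an $L^2(Q)$ norm:
$$M(T)=\tfrac12\norm{T-T_0}^2_{L^2(Q)}+C,\qquad C:=\tfrac12\,E_\mu\!\int_\Omega\Var_\epsilon\big(T_\epsilon(T_0(x))\big)\diff\mu(x),$$
with $C$ independent of $T$. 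Since $M(T_0)=C$, this gives $M(T)-M(T_0)=\tfrac12\norm{T-T_0}^2_{L^2(Q)}$, strictly positive exactly when $\norm{T-T_0}_{L^2(Q)}>0$, as claimed.

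The step I expect to require the most care is the identification of $T_\epsilon\circ T_0$ as the optimal (monotone) map from $\mu$ to $\nu$, since this is what licenses the clean application of \eqref{PC}: it rests on the fact that a composition of non-decreasing maps is non-decreasing, together with the characterisation of one-dimensional optimal maps as monotone rearrangements, and it is precisely here that Assumptions \ref{absCont} and \ref{assumpMaps} (absolute continuity of $\mu$ and membership of $T_0,T_\epsilon$ in $\mathcal{T}$) are genuinely used. Everything else --- the bias--variance split and the Fubini interchange --- is routine given the boundedness afforded by the compactness of $\Omega$.
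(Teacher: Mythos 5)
Your argument is correct, but it takes a genuinely different route from the paper's. The paper proves the inequality $M(T)\geq M(T_0)$ abstractly, by conditioning on $\mu_0$ and invoking the fact (Proposition 3.2.11 of \citet{panaretos2020invitation}) that the mean-identity assumption $E\{T_\epsilon(x)\}=x$ makes $T_0\#\mu_0$ the conditional Fr\'echet mean of $\nu$ given $\mu_0$; strictness is then obtained from the strict convexity of $M$ established separately in Lemma \ref{functional-convexity-derivative}. You instead compute $M(T)$ in closed form: using the one-dimensional identification of monotone push-forwards with optimal maps (so that $T_{\mu\to T\#\mu}=T$ and $T_{\mu\to\nu}=T_\epsilon\circ T_0$, which is where Assumptions \ref{absCont} and \ref{assumpMaps} enter), a pointwise bias--variance split, and the contraction of $\int\!\int\cdot\,\diff\mu\,\diff P_M$ into $\int\cdot\,\diff Q$, you arrive at the exact identity $M(T)-M(T_0)=\tfrac12\norm{T-T_0}^2_{L^2(Q)}$. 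This is stronger than the theorem's conclusion --- it is precisely the quadratic-growth relation the paper only derives later, in the proof of Theorem \ref{rate of convergence} via the expansion \eqref{perturb} with $D_\eta M(T_0)=0$ --- and it makes the identifiability proof self-contained, bypassing both the external Fr\'echet-mean proposition and the strict-convexity lemma. The one point to flag (shared equally by the paper's argument, so not a gap relative to it) is that the hypothesis $E\{T_\epsilon(y)\}=y$ holds only for almost every $y$, whereas your bias--variance step needs it at $y=T_0(x)$ for $Q$-almost every $x$, i.e.\ $(T_0\#Q)$-almost everywhere; this is harmless under the stated assumptions but deserves a sentence. You should also state explicitly that $T_\epsilon$ is taken independent of $\mu$, which is implicit in the model and is likewise used silently in the paper's conditional-Fr\'echet-mean step.
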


\begin{remark}[Identifiability $Q$-almost everywhere]
Theorem \ref{identifiability} establishes the identifiability of $T_0$ up to $Q$-null sets, with minimal assumptions on the input measures $\mu$. Consequently, if the random covariate measure $\mu$ is almost surely supported on a strict subset $\Omega_0\subset \Omega$, we can identify $T_0$ on $\Omega_0$ (which coincides with the support of $Q$) but not on $\Omega\setminus\Omega_0$. Of course, if the measure $Q$ is  equivalent to Lebesgue measure, in the sense of mutual absolute continuity, identifiablity will also hold Lebesgue almost everywhere on $\Omega$. Additional conditions on the law of the  random covariate measure $\mu$ can yield this equivalence. A simple condition is to require $\int_{\mathcal{W}_2(\Omega)} \inf_{x\in\Omega} f_\mu(x) \diff P_M(\mu)>0$, yielding that $f_Q(x)>0$, where $f_\mu$ and $f_Q$ are the Lebesgue densities of the measures $\mu$ and $Q$. However this condition implies that $\textrm{supp}(\mu)=\Omega$ with positive probability, which can be restrictive as we would like our model to encompass situations where none of the covariate measures are fully supported on $\Omega$. A considerably weaker condition that guarantees the equivalence of $Q$ to Lebesgue measure is to require the existence of a cover $\{E_m\}_{m\geq 1}$ of $\Omega$ such that $P_M\{E_m\subseteq\mathrm{supp}(f_\mu)\}>0$ for all $m$ -- intuitively, this enables different covariate measures to  give information on $T_0$ on different subsets of $\Omega$, but requires that they collectively provide information on all of $\Omega$. 
As an example let $\Omega=[0,1]$ and let $\mu$ be defined as the normalised Lebesgue measure on $S=[U,U+1/3] \mod 1$, where $U$ is a uniform random variable on $[0,1]$. In this case none of the realisations of $\mu$ are supported on $\Omega$, but the ``cover condition" is satisfied.
\end{remark}

Further to identifiability, the theorem gives a way to estimate $T_0$ by means of $M$-estimation. We can define an estimator $\hat{T}_N$ as the minimizer of the sample counterpart of $M$,
\begin{equation}{\label{functional}}
    M_N(T):= \frac{1}{2N} \sum_{i=1}^N d^2_{\mathcal{W}}(T\# \mu_i,\nu_i),
  \quad \quad \hat{T}_N:=\arg\min_{T \in \mathcal{T}}   M_N(T),
\end{equation}
where $(\mu_i,\nu_i)$ are independent samples from $P$ for $i=1,\dots,N$. In effect this a ``Fr\'echet least square" estimator. The existence and uniqueness of a minimizer is not a priori obvious, but we establish both in the next section \ref{Existence and Uniqueness of the Estimator}.

\begin{remark}[Pure Intercept Model]
When all the input measures are equal, $\mu_1=\hdots=\mu_N$, our regression model reduces to a ``pure intercept model", which is equivalent to the problem of estimating a Fr\'echet mean. To see this, let $\mu_0$ a fixed measure. From the assumption that $E\{T_{\epsilon_i}(x)\}=x$ a.e., one can deduce that the conditional Fr\'echet mean of the measure $\nu$, given the measure $\mu_0$ is equal to $\nu_0=T_0\# \mu_0$. Estimation of $T_0$ is then equivalent to estimation of the Fr\'echet mean $\nu_0$ of the output measures, since $T_0=T_{\mu_0\to\nu_0}=F_{\nu_0}^{-1}\circ F_{\mu_0}$.
\end{remark}

\subsection{Interpretation and Comparison}

It was argued in the introduction that the proposed regression model has the advantage of being easily interpretable, and now we elaborate on this point. The fact that the regressor operator $\Gamma(\mu)$ takes the form 
\begin{equation}\label{our-regressor}
\Gamma(\mu)=T_0\# \mu,
\end{equation}
 where $T_0:\Omega\to\Omega$ is a monotone map, has a simple interpretation in terms of mass transport: the effect of the Fr\'echet mean in this regression is to transport the probability mass assigned by $\mu$ on a subinterval $(a,b)\subset \Omega$ onto the transformed subinterval $(T_0(a),T_0(b))$. Therefore, the model can be directly interpreted at the level of the quantity that the input/output measures are modelling. In particular, the model can be interpreted at the level of quantiles. Since 
$$F^{-1}_{T_0\#\mu}(\alpha)=(T_0\circ F^{-1}_{\mu})(\alpha)=T_0\{F^{-1}_{\mu}(\alpha)\},\qquad \alpha\in(0,1),$$
we can see that the mean effect of the regression is to move the $\alpha$-quantile of $\mu$, say $q_\alpha$, to the new location $T_0(q_\alpha)$. Each response distribution $\nu_i$ will then further deviate from its conditional Fr\'echet mean $T_0\#\mu_i$ by means of a random monotone ``error" map $T_{\epsilon}:\Omega\to\Omega$ whose expectation is the identity map,
$$F^{-1}_{\nu_i}(\alpha)=T_{\epsilon_i}\big[T_0\{F^{-1}_{\mu}(\alpha)\}\big],\qquad \alpha\in(0,1).$$
This highlights the analogy with a classical regression setup, except that the addition operation is replaced by the composition operation at the level of quantiles, or equivalently, by the push-forward operation at the level of distributions. In particular, the assumption that $E\{T_{\epsilon_i}(x)\}=x$ is directly analogous to the classical assumption that the errors have zero mean: one can directly see that $E\{T_{\epsilon_i}(x)\}=x$ for almost all $x\in\Omega$ implies that
$$E\{F^{-1}_{\nu_i}(\alpha)\}=E\Big(T_{\epsilon_i}\big [ T_0\{F^{-1}_{\mu}(\alpha)\}\big]\Big)=T_0\{F^{-1}_{\mu}(\alpha)\},\qquad \alpha\in(0,1).$$

Assuming that we have obtained an estimator $\hat T_N$ of the regression map $T$ based on $N$ regressor/response pairs, we can then define the \emph{fitted distributions}, 
$$\hat{\nu}_i=\hat{T}_N\# \mu_i.$$
We can also define the $i$th \emph{residual map} $T_{e_i}(x):\Omega\to\Omega$ as the optimal transport map $T_{e_i}=T_{\nu_i\to\hat\nu_i}$ that pushes forward the observed response $\nu_i$ to the fitted value $\hat\nu_i$. The residual maps can be plotted in a ``residual plot" and contrasted to the identity map, by analogy to the classical regression case. This can help identify outlying observations, and also to appreciate in what manner the fitted values differ from the observe values. In particular, it can reveal in which regions of the support of the measures the model provides a good fit, and where less so. It can also serve to identify clusters of observations whose residuals are similar, suggesting the potential presence of a latent indicator variable, i.e. that separate regressions ought to be fit to different groups of observations. Finally, the residual plot can serve as a diagnostic tool for the validity of the model. Since the residual map $T_{e_i}$ can be seen as a proxy for the latent error map $T_{\epsilon_i}$, deviations of the average of the residual maps from the identity can serve as a means to diagnose departures from the assumed model. Note that, contrary to classical regression, where the residuals sum to zero by construction, the residual maps $T_{e_i}$ are \emph{not} constrained to have mean equal to the identity.

By comparison, \citet{chen2020wasserstein} introduce (linear) regression in Wasserstein space by means of a geometric approach, that is in a sense a linear model between tangent spaces. Namely, for $\bar\mu$ and $\bar\nu$, the Fr\'echet means of the regressor and response measures, they postulate a regressor operator of the form
\begin{equation}\label{muller-regressor}
\Gamma(\mu)=\big\{\mathcal{B}(T_{\bar{\mu}\to \mu}-{I})+I\big\}\# \bar{\nu},
\end{equation}
where $I(x)=x$ is the identity map on $\Omega$, and $\mathcal{B}: L^2(\bar{\mu})\to L^2(\bar{\nu})$ is a bounded linear operator with some assumptions, so that the terms involved be well-defined. Again, linearity guarantees identifiability. The expression appears convoluted, but the geometrical interpretation is simple: $T_{\bar{\mu}\to \mu}-I$ represents the image of $\mu$ under the log map at $\bar{\mu}$ (see Section 2.3 of \citet{panaretos2020invitation}). Equivalently, $T_{\bar{\mu}\to \mu}-I$ is the lifting of $\mu$ to the tangent space $\mathrm{Tan}_{\bar{\mu}}\{\mathcal{W}_2(\Omega)\}\subset L^2(\bar{\mu})$ at $\bar{\mu}$. Once the regressor $\mu$ is lifted onto $\mathrm{Tan}_{\bar{\mu}}\{\mathcal{W}_2(\Omega)\}\subset L^2(\bar{\mu})$, the action of the regression operator is to map it to its image in $L^2(\bar\nu)$ via the bounded linear operator $\mathcal{B}:L^2(\bar{\mu})\to L^2(\bar{\nu})$, as in a standard functional linear model. The final step is to push forward $\bar{\nu}$ by this image plus the identity, i.e. $\mathcal{B}(T_{\bar{\mu}\to \mu}-{I})+I$, which retracts back onto $\mathcal{W}_2(\Omega)$ and yields a measure (if $\mathcal{B}(T_{\bar{\mu}\to \mu}-{I})+I$ is a monotone map, then this is equivalent to exponentiation, see Section 2.3 of \citet{panaretos2020invitation}). The model is most easily interpretable on the tangent space, where it states that the expected lifting of the response $\nu_i$ at $\bar{\nu}$ is related to the lifting of the regressor $\mu_i$ at $\bar{\mu}$ by means of the linear operator $\mathcal{B}$. Similarly, fitted values are defined on the tangent space, and then can be retracted by the same push-forward operation.

The two approaches do not directly compare, and neither captures the other as a special case. Similarly, there is no reason to a priori expect that one model would typically outperform the other in terms of fit, and one can expect this to depend on the specific data set at hand. Thus, our method should be seen as an alternative rather than an attempt at an improved or more general version of regression. An apparent advantage of the regressor function \eqref{our-regressor}, however, is an arguably easier and more direct interpretation of the regression effect, directly at the level regressor/response, through a monotone re-arrangement of probability mass, as discussed above. Indeed this allows a direct point-wise interpretation of the regression effect. The regressor \eqref{muller-regressor} on the other hand allows for a traditional (functional) regression interpretation via the linear operator $\mathcal{B}$, albeit acting on the logarithms of regressor/response, which makes it harder to interpret the regression effect at the level of the original measures, since there are two transformations involved, one non-linear and one linear. Similar points can be made with regards to the residuals and residual plots. Another potential advantage is at the level of regularity conditions imposed on $\Gamma$ for the purposes of theory. Equation \eqref{muller-regressor} leads to an inverse problem on the tangent space, as is standard with functional linear models, and thus requires more delicate technical assumptions on the problem, in addition to regularisation. By contrast, the shape-constrained approach \eqref{our-regressor} only requires monotonicity on the regressor $T_0$. It also avoids the instabilities of an inverse problem.

The utility of our model illustrated in Section \ref{mortality_data}, which considers an example where the age-at-death distribution $\nu_i$ for country $i$ in 2013 serves as a response distribution, and the age-at-death distribution $\mu_i$ of the same country in 1983 serves as the regressor. Interestingly, it leads to similar fits and qualitative conclusions as the analysis of the same data by \citet{chen2020wasserstein}, while exhibiting a clean and more expansive interpretation. Indeed, our definition of residual maps help identify effects related to changes in infant mortality not easily detectable when looking only at the fitted distributions, and to identify an interesting clustering of observations. See Section \ref{mortality_data} for more details.

\subsection{Existence and Uniqueness of the Estimator} {\label{Existence and Uniqueness of the Estimator}}
In this section, we establish the existence and uniqueness of the estimator $\hat{T}_N$. To show the existence, we use a variant of the Weierstrass theorem, namely \citet[Thm 7.3.6]{kurdila2006convex}, stated for convenience as Theorem \ref{theorem for existence} in the Appendix. This requires establishing the convexity and Gateaux differentiability of the functional $M_N$, and this we do in the next lemma:

\begin{lemma}[Strict Convexity and Differentiability]{\label{functional-convexity-derivative}}
Let $\mathcal{T}$ be the parameter set and suppose we have $N$ independent observations $(\mu_i,\nu_i)$ that are realizations of $P$. Both the empirical functional $M_N(T)$ and the population functional $M(T)$ are strictly convex with respect to $T \in \mathcal{T}$.
Moreover the functionals $M$ and $M_N$ are Gateaux-differentiable on the set of optimal maps in $\mathcal{T}$ with respect to the $L^2(Q)$ and $L^2(Q_N)$ distances, respectively. The corresponding derivatives of $M$ in the direction $\eta\in L^2(Q)$ is:
\begin{equation}
D_\eta M(T)=\int\int_{\Omega} \eta(x)\{T(x)-T_{\mu,\nu}(x)\} \diff \mu(x)\diff P(\mu,\nu),  
\end{equation}
and the derivative of $M_N$ in the direction $\eta\in L^2(Q_N)$ is
\begin{equation}
D_\eta M_N(T)=\frac{1}{N}\sum_{i=1}^N \int_{\Omega} \eta(x)\{T(x)-T_{\mu_i,\nu_i}(x)\} \diff \mu_i(x),
\end{equation}
where $T_{\mu,\nu}$ is the optimal map from $\mu$ to $\nu$.
\end{lemma}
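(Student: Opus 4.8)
The lemma asserts three things about each functional: strict convexity in $T$, Gateaux differentiability, and an explicit formula for the derivative. I would handle the two functionals in parallel, since $M_N$ is just the population functional $M$ with $P$ replaced by the empirical measure $\frac1N\sum\delta_{(\mu_i,\nu_i)}$; so it suffices to do all computations for a single integrand $T\mapsto \frac12 d^2_{\mathcal{W}}(T\#\mu,\nu)$ and then integrate (or average) against $P$ (respectively against the empirical law). The engine behind everything is the flatness identity \eqref{PC}: for a fixed base measure, Wasserstein distances between measures dominated by that base are just $L^2$ distances between the corresponding transport maps.

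\textbf{Rewriting the integrand.}
First I would fix a pair $(\mu,\nu)$ and use \eqref{PC} with base measure $\mu$ to write
$$
d^2_{\mathcal{W}}(T\#\mu,\nu)=\norm{T_{\mu\to T\#\mu}-T_{\mu\to\nu}}^2_{L^2(\mu)}.
$$
The key simplification is that, since $T$ is a monotone (optimal) map and $\mu$ is absolutely continuous by Assumption \ref{absCont}, the optimal map from $\mu$ to $T\#\mu$ is $T$ itself, i.e. $T_{\mu\to T\#\mu}=T$ $\mu$-a.e. Hence the integrand becomes
$$
\tfrac12 d^2_{\mathcal{W}}(T\#\mu,\nu)=\tfrac12\int_\Omega \{T(x)-T_{\mu,\nu}(x)\}^2\diff\mu(x),
$$
a genuine squared $L^2(\mu)$ distance between $T$ and the (fixed, $T$-independent) target map $T_{\mu,\nu}$. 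This is the crucial reduction: it turns each summand into a quadratic functional of $T$, from which both convexity and the derivative formula follow almost mechanically.

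\textbf{Convexity and the derivative.}
With the integrand in quadratic form, strict convexity is immediate for each fixed $(\mu,\nu)$: the map $T\mapsto \int\{T-T_{\mu,\nu}\}^2\diff\mu$ is strictly convex in $L^2(\mu)$, and since $\mu\ll Q$ the strictness passes to the $L^2(Q)$ norm after integrating against $P$ (and likewise for $L^2(Q_N)$). Convexity of the parameter set $\mathcal{T}$ — needed for the statement to make sense — follows because a convex combination of monotone maps with nonnegative, finite derivatives is again such a map. For differentiability I would compute, for $\eta\in L^2(Q)$ and scalar $t$, the one-sided expansion of $\frac12\int\{T+t\eta-T_{\mu,\nu}\}^2\diff\mu$; the linear-in-$t$ term is exactly $\int\eta\,\{T-T_{\mu,\nu}\}\diff\mu$, giving the stated $D_\eta M$ after integrating against $P$, and the analogous averaging gives $D_\eta M_N$. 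One must check that differentiation under the integral sign is justified and that the Gateaux derivative is well-defined on $\mathcal{T}$ with respect to $L^2(Q)$ (resp.\ $L^2(Q_N)$), using that $\Omega$ is compact so $T$, $T_{\mu,\nu}$, and $\eta$ are all square-integrable.

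\textbf{The main obstacle.}
The delicate point is the identification $T_{\mu\to T\#\mu}=T$ and the strictness of the convexity. The former requires that $T$ be (essentially) monotone and that $\mu$ be absolutely continuous, so that the optimal coupling between $\mu$ and its own pushforward under a monotone map is the deterministic one given by $T$; I would invoke Assumptions \ref{absCont} and \ref{assumpMaps} here and the one-dimensional characterization $T=F^{-1}_{T\#\mu}\circ F_\mu$ to confirm it rigorously, being careful on the set where $T'=0$ (flat pieces of $T$), where $T$ is not invertible but the pushforward identity still holds $\mu$-a.e. The strictness of convexity is the other subtle issue: the naive quadratic argument gives strictness in $L^2(\mu)$ for each realization, but the claim is strictness in $L^2(Q)$, which could fail if the random $\mu$ charges different null sets. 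I would resolve this by noting that $\norm{T_1-T_2}_{L^2(Q)}^2=\int \norm{T_1-T_2}_{L^2(\mu)}^2\diff P_M(\mu)$, so that $\norm{T_1-T_2}_{L^2(Q)}>0$ forces $\norm{T_1-T_2}_{L^2(\mu)}>0$ on a $P_M$-positive set of $\mu$, and the strict inequality on that set integrates to a strict inequality for $M$; the same averaging argument handles $M_N$ with $Q_N$ in place of $Q$.
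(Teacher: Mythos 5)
Your proposal is correct and follows essentially the same route as the paper: the paper's starting point $M(T)=\frac{1}{2}\int\int_0^1|T\{F^{-1}_{\mu}(p)\}-F^{-1}_{\nu}(p)|^2\diff p\diff P(\mu,\nu)$ is exactly your quadratic reduction $\frac12\norm{T-T_{\mu,\nu}}^2_{L^2(\mu)}$ after the change of variables $x=F^{-1}_{\mu}(p)$, and both then read off convexity and the Gateaux derivative from the quadratic expansion. If anything, you are more explicit than the paper on the two delicate points (the identification $T_{\mu\to T\#\mu}=T$ and the aggregation of strict convexity from $L^2(\mu)$ to $L^2(Q)$ via a $P_M$-positive set), which the paper dispatches with a brief remark.
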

\noindent Since $\mathcal{T}$ is a convex, closed, and bounded subset of $L^2(Q)$ functions, we may now apply the Weierstrass theorem cited above to conclude:

\begin{proposition}[Existence and Uniqueness of the Estimator]{\label{Unique-minimizer}}
There exists a unique solution $\hat{T}_N\in\mathcal{T}$ to the Fr\'echet sum-of-squares minimization problem \eqref{functional}, with uniqueness being in the $L^2(Q_N)$ sense.
\end{proposition}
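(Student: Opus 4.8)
The plan is to invoke the abstract Weierstrass-type existence theorem (Theorem \ref{theorem for existence} in the Appendix) together with the strict convexity supplied by Lemma \ref{functional-convexity-derivative}. To do so I must verify the hypotheses of that theorem for the functional $M_N$ on the parameter set $\mathcal{T}$, viewed as a subset of the Hilbert space $L^2(Q_N)$. The standard hypotheses are: (i) the feasible set is a nonempty, closed, bounded, convex subset of a reflexive Banach (here Hilbert) space; (ii) the objective functional is convex and lower semicontinuous (weakly, it suffices that it be convex and Gateaux differentiable, or equivalently weakly sequentially lower semicontinuous). Existence of a minimizer then follows, and strict convexity upgrades this to uniqueness.

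First I would establish that $\mathcal{T}$ is a convex, closed, and bounded subset of $L^2(Q_N)$, as asserted in the sentence preceding the proposition. Convexity is immediate: a convex combination of two monotone maps with derivatives in $[0,\infty)$ is again monotone with derivative in $[0,\infty)$, and the range constraint $T:\Omega\to\Omega$ is preserved since $\Omega$ is an interval (hence convex). Boundedness in $L^2(Q_N)$ follows because every $T\in\mathcal{T}$ takes values in the compact interval $\Omega$, so $\norm{T}_{L^2(Q_N)}^2=\int_\Omega |T(x)|^2\diff Q_N(x)\leq (\sup_{y\in\Omega}|y|)^2<\infty$ uniformly over $\mathcal{T}$. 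Closedness requires slightly more care: I would take a sequence $T_n\in\mathcal{T}$ converging in $L^2(Q_N)$ to some $T$, pass to an a.e.-convergent subsequence, and observe that the pointwise limit of uniformly bounded monotone nondecreasing functions is again monotone nondecreasing with values in $\overline{\Omega}=\Omega$; the derivative constraint $0\le T'<\infty$ then holds $Q_N$-a.e. by Lebesgue's differentiation theorem for monotone functions, exactly as invoked in the definition of $\mathcal{T}$.

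Next I would confirm the objective-functional hypotheses. By Lemma \ref{functional-convexity-derivative}, $M_N$ is strictly convex and Gateaux differentiable on $\mathcal{T}$ with respect to the $L^2(Q_N)$ distance; convexity together with Gateaux differentiability yields weak lower semicontinuity, which is the continuity-type hypothesis the Weierstrass theorem requires. With all hypotheses of Theorem \ref{theorem for existence} verified, I conclude the existence of at least one minimizer $\hat T_N\in\mathcal{T}$ of $M_N$. Finally, uniqueness follows from strict convexity: if $\hat T_N$ and $\tilde T_N$ were two distinct minimizers in the $L^2(Q_N)$ sense, then their midpoint $\tfrac12(\hat T_N+\tilde T_N)\in\mathcal{T}$ would have strictly smaller objective value by strict convexity, contradicting minimality; hence the minimizer is unique up to $Q_N$-null sets, which is precisely the stated sense of uniqueness.

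I expect the main obstacle to be the verification of closedness of $\mathcal{T}$ in $L^2(Q_N)$, since this is where the monotonicity and derivative constraints interact with the topology. The subtlety is that $L^2$ convergence only gives an a.e.-convergent subsequence, and one must argue that monotonicity and the range constraint survive passage to the limit while the derivative bound is recovered $Q_N$-a.e.; a careful appeal to Lebesgue's monotone differentiation theorem, exactly as in the definition of $\mathcal{T}$, handles this. The convexity, boundedness, and the functional-analytic hypotheses are comparatively routine given Lemma \ref{functional-convexity-derivative}.
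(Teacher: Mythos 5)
Your proposal is correct and follows essentially the same route as the paper: verify that $\mathcal{T}$ is a closed, convex, bounded subset of the relevant $L^2$ space, invoke the strict convexity and Gateaux differentiability from Lemma \ref{functional-convexity-derivative} to apply the Weierstrass-type theorem (Theorem \ref{theorem for existence}) for existence, and deduce uniqueness from strict convexity. The paper simply asserts the properties of $\mathcal{T}$ without proof, whereas you spell out the convexity, boundedness, and closedness arguments explicitly; this is a faithful elaboration rather than a different approach.
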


\subsection{Computation}{\label{computation}}

\noindent Since the domain $\Omega$ is one-dimensional, we have that 
$$d^2_{\mathcal{W}}(\nu,\mu)=\int_{0}^1 \big|F^{-1}_\mu(p)-F_\nu^{-1}(p)\big|^2 \diff{p}.$$ 
Furthermore, since the regressors $\mu_i$ are assumed absolutely continuous (Assumption \ref{absCont}), we can always write $\nu_i=T_{\mu_i \to \nu_i}\#\mu_i$ for an optimal map $T_{\mu_i \to \nu_i}$. We can therefore manipulate the Fr\'echet sum-of-squares and use a Riemann approximation to write
\begin{align}
\sum_{i=1}^N d^2_{\mathcal{W}}(T\#\mu_i,\nu_i)=\sum_{i=1}^N \norm{T\circ F^{-1}_{\mu_i} - F^{-1}_{\nu_i}}^2_{L^2} &=\sum_{i=1}^N \int_0^1 \big|T\circ F^{-1}_{\mu_i}(p)-F^{-1}_{\nu_i}(p)\big|^2 \diff p \nonumber \\
&= \sum_{i=1}^N \int_0^1 \big|T\circ F^{-1}_{\mu_i}(p)-T_{\mu_i \to \nu_i} \circ F^{-1}_{\mu_i}(p)\big|^2 \diff p \nonumber \\
&=\sum_{i=1}^N \int_{\Omega} \big|T(x)-T_{\mu_i \to \nu_i}(x)\big|^2 \diff \mu_i (x) \nonumber \\
&\approx \sum_{i=1}^N \sum_{j=1}^m \big|T(x_j)-T_{\mu_i \to \nu_i}(x_j)\big|^2 \mu_i(h_j), \label{Riemann}
\end{align}
for $m$ user-defined nodes $\{x_j\}_{j=1}^{m}$ in an interval partition $\{I_j\}_{j=1}^{m}$  of $\Omega$, and $h_j=|I_j|$. Writing $y_{ij}=T_{\mu_i \to \nu_i}(x_j), w_{ij}=\mu_i(h_j)$ and $z_j=T(x_j)$, we reduce the above approximate minimization of the Fr\'echet sum-of-squares to the solution of the following convex optimization problem:

\begin{equation}
\begin{split}
  &  \text{minimise }f(z)=\sum_{i=1}^N \sum_{j=1}^m w_{ij} h_i(y_{ij},z_j)\\
&\text{subject to  } z_1 \leq z_2 \leq \cdots \leq z_m 
\end{split}
\end{equation}
where $h_i(y_{ij},z_j)=|y_{ij}-z_j|^2$.  The above problem resembles an isotonic regression problem with repeated measurements, and can be solved via the Pool-Adjacent-Violater-Algorithm (PAVA) \citep{mair2009isotone}.

\subsection{Consistency and Rate of Convergence}{\label{consistency and rate of convergence}}

In this section, we establish the asymptotic properties of the proposed estimators both in the case of the fully observed set of measures $\{\mu_i,\nu_i\}$ and the case where one only indirectly observes input/output distributions through i.i.d. samples from each. A natural risk function to measure the quality of the estimator is the Fr\'echet mean squared error:
$$R(T):=\mathbb{E}_{\mu \sim P_M} d^2_{\mathcal{W}}(T_0\# \mu,T\# \mu)=\int_{\mathcal{W}_2(\Omega)} d^2_{\mathcal{W}}(T_0\# \mu,T\#\mu) \diff P_M(\mu).$$
Using the equation \eqref{PC} we can rewrite the above risk as follows:
\begin{equation*}
    \begin{split}
        \int d^2_{\mathcal{W}}(T_0\# \mu,T\#\mu) \diff P_M(\mu) &= \int \norm{T_0-T}_{L^2(\mu)}^2 \diff P_M(\mu) \\
        &=\int \int_{\Omega} \big|T_0(x)-T(x)\big|^2 \diff \mu(x) \diff P_M(\mu) \\
        &=  \norm{T_0-T}^2_{L^2(Q)}
    \end{split}
\end{equation*}

Thus, we can obtain consistency and convergence rates in Fr\'echet mean squared error using the criterion $\|{T_0-\hat{T}_N}\|_{L^2(Q)}$, in particular:

\begin{theorem}\label{rate of convergence}
In the context of model \eqref{model}, suppose that Assumptions \ref{absCont} and \ref{assumpMaps} hold true. Then, the estimator $\hat{T}_N$ defined in \eqref{functional} is a consistent estimator for $T_0$ satisfying
\begin{equation}
N^{1/3} \norm{\hat{T}_{N}-T}_{L^2(Q)}= O_P(1).
\end{equation}
\end{theorem}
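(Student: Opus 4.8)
The plan is to treat $\hat T_N$ as an $M$-estimator and invoke the standard rate theorem for $M$-estimation (in the style of van der Vaart and Wellner), for which two ingredients are required: a quadratic \emph{curvature} lower bound for the population criterion $M$ near $T_0$, and a \emph{modulus-of-continuity} bound for the centred empirical process $\mathbb{G}_N=\sqrt{N}(M_N-M)$ over shrinking $L^2(Q)$-balls around $T_0$. The cube-root exponent will emerge from combining a quadratic curvature (i.e.\ a $d^2$ separation with $d=\norm{\cdot-T_0}_{L^2(Q)}$) with the entropy of monotone maps, which produces a modulus $\phi_N(\delta)\lesssim\sqrt{\delta}$; this is the familiar mechanism behind isotonic-type estimators, consistent with the reduction to isotonic regression in Section~\ref{computation}.

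First I would establish the curvature identity. Writing $m_T(\mu,\nu)=\tfrac12 d^2_{\mathcal{W}}(T\#\mu,\nu)$, the flatness identity \eqref{PC} together with the one-dimensional absolute continuity of $\mu$ (Assumption~\ref{absCont}) gives $m_T(\mu,\nu)=\tfrac12\norm{T-T_{\mu,\nu}}_{L^2(\mu)}^2$, where $T_{\mu,\nu}$ is the monotone map from $\mu$ to $\nu$. Under model \eqref{model} this map is exactly $T_{\mu,\nu}=T_\epsilon\circ T_0$, since a composition of increasing maps is increasing and the monotone coupling is the unique optimal one in dimension one. Expanding the square gives
\begin{equation*}
M(T)-M(T_0)=\tfrac12\norm{T-T_0}_{L^2(Q)}^2+\int\!\!\int_\Omega (T-T_0)(x)\,\{T_0(x)-T_\epsilon(T_0(x))\}\,\diff\mu(x)\,\diff P.
\end{equation*}
Conditioning on $\mu$ (which is independent of $T_\epsilon$) and using $E\{T_\epsilon(y)\}=y$ a.e.\ kills the cross term, leaving the exact relation $M(T)-M(T_0)=\tfrac12\norm{T-T_0}_{L^2(Q)}^2$. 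This supplies the quadratic curvature required, and simultaneously re-derives the identifiability of Theorem~\ref{identifiability}.

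The main work lies in controlling $\mathbb{G}_N$ over $\mathcal{F}_\delta:=\{m_T-m_{T_0}:T\in\mathcal{T},\ \norm{T-T_0}_{L^2(Q)}\le\delta\}$. Since every element of $\mathcal{T}$ maps into the compact set $\Omega$, the integrand defining $m_T$ is uniformly bounded, and Cauchy--Schwarz yields the Lipschitz transfer $\norm{m_T-m_{T_0}}_{L^2(P)}\le D\,\norm{T-T_0}_{L^2(Q)}$ with $D=\mathrm{diam}(\Omega)$. Hence every element of $\mathcal{F}_\delta$ has $L^2(P)$-norm $\lesssim\delta$, localising the relevant bracketing integral to scale $\delta$, while brackets for $\mathcal{F}_\delta$ are inherited from brackets for $\mathcal{T}$ in $L^2(Q)$. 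The key geometric input is that the class of bounded monotone maps on an interval has bracketing entropy $\log N_{[]}(\epsilon,\mathcal{T},L^2(Q))\lesssim 1/\epsilon$; consequently $\int_0^\delta\sqrt{\log N_{[]}(\epsilon,\mathcal{F}_\delta,L^2(P))}\,\diff\epsilon\lesssim\int_0^\delta\epsilon^{-1/2}\,\diff\epsilon\lesssim\sqrt\delta$, and the bracketing maximal inequality gives $E\norm{\mathbb{G}_N}_{\mathcal{F}_\delta}\lesssim\phi_N(\delta)$ with $\phi_N(\delta)\asymp\sqrt\delta$. This step is the principal obstacle: beyond invoking the monotone-entropy bound, one must verify that the non-linear map $T\mapsto m_T$ does not inflate the entropy (controlled by the Lipschitz transfer above) and that the higher-order term in the maximal inequality remains negligible at the conjectured rate.

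Finally I would assemble the pieces. The inequality $M_N(\hat T_N)\le M_N(T_0)$ and the curvature identity give $\tfrac12\norm{\hat T_N-T_0}_{L^2(Q)}^2=M(\hat T_N)-M(T_0)\le -\{(M_N-M)(\hat T_N)-(M_N-M)(T_0)\}$, whose right-hand side is bounded by the modulus $\phi_N$ evaluated at the radius $\norm{\hat T_N-T_0}_{L^2(Q)}$. Since $\delta\mapsto\phi_N(\delta)/\delta$ is decreasing, the rate theorem applies and the rate $r_N$ solves $r_N^2\,\phi_N(1/r_N)\lesssim\sqrt{N}$, i.e.\ $r_N^{3/2}\lesssim\sqrt{N}$, yielding $r_N=N^{1/3}$ as claimed. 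As a prerequisite, the consistency $\norm{\hat T_N-T_0}_{L^2(Q)}\to 0$ in probability follows from the well-separated maximum furnished by the curvature identity together with the uniform law of large numbers $\sup_{T\in\mathcal{T}}|M_N(T)-M(T)|\to 0$, itself a consequence of the finite bracketing entropy and bounded envelope of $\{m_T\}$. A minor bookkeeping point is that $\hat T_N$ is unique only $Q_N$-a.e.\ while the risk is measured in $L^2(Q)$; this is reconciled by passing from $Q_N$ to $Q$ through the same Glivenko--Cantelli control.
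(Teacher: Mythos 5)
Your proposal is correct and follows essentially the same route as the paper's proof: the rate theorem for $M$-estimators (van der Vaart--Wellner Theorem 3.2.5), the exact quadratic identity $M(T)-M(T_0)=\tfrac12\norm{T-T_0}_{L^2(Q)}^2$ (which the paper obtains from its Gateaux expansion with $D_\eta M(T_0)=0$, equivalent to your direct expansion killing the cross term via $E\{T_\epsilon(y)\}=y$), the Lipschitz transfer of bracketing entropy from the monotone class $\mathcal{T}$ to $\{m_T-m_{T_0}\}$, and the resulting modulus $\phi_N(\delta)\asymp\sqrt{\delta}$ yielding $N^{-1/3}$. Your explicit treatment of consistency and of the $L^2(Q_N)$ versus $L^2(Q)$ bookkeeping is slightly more careful than the paper's, but the argument is the same.
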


\vspace{5mm}

In many practical applications, one does not have not access to the measures $(\mu_i,\nu_i)$.
Instead, one has to make do with observing random samples from each $\mu_i$ and $\nu_i$. In this case, a standard approach is to use smoothed proxies in lieu of the unobservable measures, usually assuming some more regularity. Let $\mu_i^n$ and $\nu_i^n$ be consistent estimators  of $\mu_i$ and $\nu_i$ obtained from smoothing a random sample of size $n$  from each respective measure. Given such estimators, define a new estimator of $T_0$ as
\begin{equation}{\label{estimator-sample}}
 \hat{T}_{n,N}:=\arg\min_{T \in \mathcal{T}_B} \frac{1}{2N}\sum_{i=1}^N d^2_{\mathcal{W}}(T\# \mu_i^n,\nu_i^n),  
\end{equation}
where
$$\mathcal{T}_B:=\{T :\Omega \to \Omega: 0\leq T'(x) < B \text{ for } Q \text{-almost every } x \in \Omega \}\subset \mathcal{T} =\cup_{B>0} \mathcal{T}_B.$$
Note that here one can use any estimators of $\mu_i$ and $\nu_i$ which are consistent in Wasserstein distance, provided $\mu_i^n$ is absolutely continuous. Then, the rate of convergence of $\hat{T}_{n,N}$ will depend on the rate of convergence of  $\mu_i^n$ and $\nu_i^n$ to $\mu_i$ and $\nu_i$, respectively in the Wasserstein distance:

\begin{theorem}{\label{estimator-discrete-samples}}
In the context of model \eqref{model}, suppose that Assumption \ref{absCont} holds true, and furthermore that there exists a $B<\infty$ such that $T_0 \in \mathcal{T}_B$, and $T_\epsilon\in \mathcal{T}_B$ almost surely. Then, the estimator $\hat{T}_{n,N}$ defined in \eqref{estimator-sample} satisfies
\begin{equation}\label{partially-observed-rate}
\norm{\hat{T}_{n,N}-T_0}_{L^2(Q)}\lesssim N^{-1/3}+{r_n}^{-1/2}
\end{equation}
where $r_n^{-1}$ is the rate of convergence in the Wasserstein distance of $\mu_i^n$ to $\mu_i$ and $\nu_i^n$ to $\nu_i$.
\end{theorem}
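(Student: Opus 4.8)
The plan is to reduce the partially observed problem to an \emph{approximate-minimizer} version of the fully observed analysis already behind Theorem~\ref{rate of convergence}. Write $M_N(T)=\frac{1}{2N}\sum_{i=1}^N d^2_{\mathcal W}(T\#\mu_i,\nu_i)$ for the (unobservable) fully observed criterion, let $M$ denote the population functional \eqref{population-functional}, and set $\Delta_n:=\frac{1}{N}\sum_{i=1}^N\{d_{\mathcal W}(\mu_i^n,\mu_i)+d_{\mathcal W}(\nu_i^n,\nu_i)\}$. Granting the stability bound $\sup_{T\in\mathcal T_B}|M_{n,N}(T)-M_N(T)|\lesssim\Delta_n$ of the next paragraph, the fact that $\hat T_{n,N}$ minimizes $M_{n,N}$ over $\mathcal T_B$ gives $M_N(\hat T_{n,N})\le M_{n,N}(\hat T_{n,N})+O(\Delta_n)\le M_{n,N}(T_0)+O(\Delta_n)\le M_N(T_0)+O(\Delta_n)$; that is, $\hat T_{n,N}$ is an $O(\Delta_n)$-approximate minimizer of the fully observed criterion $M_N$. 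The rate will then follow by feeding this objective-value slack into the cube-root machinery used for Theorem~\ref{rate of convergence}.

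The stability bound is the genuinely new, but routine, step. For each $i$ and every $T\in\mathcal T_B$, the reverse triangle inequality for $d_{\mathcal W}$ together with $|a^2-b^2|\le|a-b|\,(a+b)$ gives
$$|d^2_{\mathcal W}(T\#\mu_i^n,\nu_i^n)-d^2_{\mathcal W}(T\#\mu_i,\nu_i)|\le\{d_{\mathcal W}(T\#\mu_i^n,T\#\mu_i)+d_{\mathcal W}(\nu_i^n,\nu_i)\}\cdot 2\,\mathrm{diam}(\Omega),$$
the bounded factor arising because all measures are supported in the compact set $\Omega$. Since every $T\in\mathcal T_B$ is $B$-Lipschitz, pushing forward contracts Wasserstein distance by at most $B$, so $d_{\mathcal W}(T\#\mu_i^n,T\#\mu_i)\le B\,d_{\mathcal W}(\mu_i^n,\mu_i)$ (couple $\mu_i^n,\mu_i$ optimally and transport through the pair map $(T,T)$). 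Averaging over $i$ yields $\sup_{T\in\mathcal T_B}|M_{n,N}(T)-M_N(T)|\lesssim\Delta_n$ with a constant depending only on $B$ and $\mathrm{diam}(\Omega)$; the bound is uniform in $T$ precisely because the Lipschitz constant of each $T$ is at most $B$, which is the role of restricting to $\mathcal T_B$ rather than all of $\mathcal T$. Finally, since $\mathbb E\,d_{\mathcal W}(\mu_i^n,\mu_i)\lesssim r_n^{-1}$ and similarly for $\nu_i^n$, Markov's inequality gives $\Delta_n=O_P(r_n^{-1})$.

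It remains to run the rate argument for $M_N$ around $T_0$ with this extra slack. Exactly as in the risk computation preceding the theorem, the flatness identity \eqref{PC} combined with $\mathbb E\{T_{\mu,\nu}(x)\mid\mu\}=T_0(x)$ — which holds because $T_{\mu,\nu}=T_\epsilon\circ T_0$ under model \eqref{model} and $\mathbb E\{T_\epsilon(x)\}=x$ — yields the \emph{exact} quadratic curvature $M(T)-M(T_0)=\tfrac12\norm{T-T_0}^2_{L^2(Q)}$. Writing $d:=\norm{\hat T_{n,N}-T_0}_{L^2(Q)}$ and inserting $\pm M_N$, the approximate-minimizer property gives
$$\tfrac12 d^2=M(\hat T_{n,N})-M(T_0)\le\{(M-M_N)(\hat T_{n,N})-(M-M_N)(T_0)\}+O(\Delta_n).$$
The centered empirical-process increment is controlled by the same local maximal inequality as in Theorem~\ref{rate of convergence}: the class $\{(\mu,\nu)\mapsto\norm{T-T_{\mu,\nu}}^2_{L^2(\mu)}:T\in\mathcal T_B\}$ is indexed by uniformly bounded monotone maps, whose bracketing entropy grows like $1/\epsilon$, so the modulus of continuity over a $\delta$-ball is of order $\sqrt\delta$ and the increment is $O_P(N^{-1/2}\sqrt d)$. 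Hence $d^2\lesssim N^{-1/2}\sqrt d+\Delta_n$ in the $O_P$ sense, and solving this in the two regimes gives $d\lesssim N^{-1/3}+\Delta_n^{1/2}=N^{-1/3}+r_n^{-1/2}$; equivalently, one invokes the rate theorem for approximate $M$-estimators (quadratic curvature, modulus $\sqrt\delta$, slack $\Delta_n$), which returns $\max(N^{-1/3},\Delta_n^{1/2})$ directly.

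The main obstacle is not Step~1 or the curvature computation, which are elementary, but the localized empirical-process control together with the peeling argument: one must verify the $\sqrt\delta$ modulus through the bracketing entropy of the monotone class and then run the slicing device while carrying the additional deterministic slack $\Delta_n$, so that the stochastic fluctuation term $N^{-1/3}$ and the measure-approximation term $r_n^{-1/2}$ are disentangled and combine additively. Since this machinery is already established for Theorem~\ref{rate of convergence}, the essential new content is the stability bound of Step~1 and the bookkeeping that converts an $O(\Delta_n)$ slack in objective value into an $O(\Delta_n^{1/2})$ contribution to the rate via the exact quadratic curvature.
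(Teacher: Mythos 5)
Your proposal is correct, and the first half coincides with the paper's argument: your stability bound $\sup_{T\in\mathcal T_B}|M_{n,N}(T)-M_N(T)|\lesssim \Delta_n$ is exactly the content of the paper's Lemma \ref{Figalli} (Lipschitz push-forward contracts $d_{\mathcal{W}}$ by at most $B$) combined with the display \eqref{figalli2}, and this is precisely where the restriction to $\mathcal{T}_B$ is used. Where you diverge is in how the perturbation is converted into a rate. The paper introduces the intermediate (unobservable) estimator $\hat{T}_N$, applies Theorem \ref{theorem3.2.5} a \emph{second} time to the pair $(M_{n,N},M_N)$ with $N$ fixed to get $\norm{\hat{T}_{n,N}-\hat{T}_N}_{L^2(Q)}\lesssim r_n^{-1/2}$, and then concludes by the triangle inequality together with the $N^{-1/3}$ rate of Theorem \ref{rate of convergence}. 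You instead observe that the stability bound makes $\hat{T}_{n,N}$ an $O_P(\Delta_n)$-approximate minimizer of $M_N$, and feed that slack into a single application of the rate theorem around $T_0$ (Theorem \ref{theorem3.2.5} explicitly permits $M_N(\hat\theta_N)\le M_N(\theta_0)+O_P(r_N^{-2})$), obtaining $\max(N^{-1/3},\Delta_n^{1/2})$ in one pass. The two routes yield the same bound, but yours has a technical advantage worth noting: it only ever requires the quadratic curvature and the $\sqrt{\delta}$ entropy modulus for the \emph{population} functional $M$ around $T_0$ in $L^2(Q)$, which is already verified in the proof of Theorem \ref{rate of convergence}; the paper's second application of Theorem \ref{theorem3.2.5} needs quadratic growth of the \emph{empirical} functional $M_N$ around its random minimizer $\hat{T}_N$, which naturally lives in $L^2(Q_N)$ rather than $L^2(Q)$, a mismatch your argument sidesteps. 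Your uniform-in-$T$ formulation of the stability bound (pointwise in the realization, via Markov on $\Delta_n$) is also slightly cleaner than the paper's per-$T$ expectation bound. The only bookkeeping you should make explicit is the step $d^2\lesssim N^{-1/2}\sqrt{d}+\Delta_n$ being solved in the two regimes, but that is the standard peeling computation and poses no difficulty.
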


Precise values of $r_n$ can be obtained by choosing specific estimators and imposing additional regularity on the underlying regressor/response measures.  For instance, one can follow the estimation approach of \cite{weed2019estimation} and obtain the minimax rate of convergence over measures with densities in Besov classes.

\begin{remark}
{Note that $B\in (0,\infty)$ can be any finite constant, however large. Its precise value does not influence the rate \eqref{partially-observed-rate} itself, but only the constants.  It is therefore not to be interpreted as a regularisation parameter. To be strictly faithful to the  assumptions of Theorem \ref{estimator-discrete-samples}, the computation could incorporate additional constraints of the form $(z_{i+1}-z_i)\leq B(x_{i+1}-x_i)$, as a discretization of $T' \leq B$. From a practical point of view, though, we always have $(z_{i+1}-z_i)\leq \big({|\Omega|}/{\min_{1\leq j\leq m} |I_j|}\big)(x_{i+1}-x_i)$, since $T:\Omega\rightarrow\Omega$ is monotone. So maintaining the original formulation of Section \ref{computation} implicitly corresponds to some $B>|\Omega|/\min_{1\leq j\leq m} |I_j|$ in Theorem \ref{estimator-discrete-samples} (recall that $m$ is the user-defined number of nodes in the Riemann sum approximation \eqref{Riemann}). 
}
\end{remark}


\section{Simulated Examples}
In this section we illustrate the estimation framework and finite sample performance of the method by means of some simulations. First we generate random predictors $\{\mu_i\}^N_{i=1}$. We consider random distributions that are mixtures of three independent Beta components. We choose the parameters of the Beta distributions to be uniformly distributed random variables on $[1,10]$, with densities
 $$f_{\mu_i}(x) = \sum_{j=1}^3 \pi_{j}  b_{\alpha_{i,j},\beta_{i,j}}(x), \quad \alpha_{i,j} \sim \text{Uniform}[1,10], \quad \beta_{i,j} \sim \text{Uniform}[1,10].$$
 The $\{\pi_j\}_{j=1}^3$ are arbitrary fixed mixture weights in $[0,1]$, such that $\sum_{j=1}^3 \pi_j=1$.
As for the noise maps $T_{\epsilon_i}$, we use the class of random optimal maps introduced in \citet{panaretos2016amplitude}.  Let $k$ be an integer and define $\zeta_k:[0,1] \to [0,1]$ by $$\zeta_0(x)=x, \quad \zeta_k(x)=x-\frac{\sin(\pi kx)}{|k|\pi}, \qquad k\in Z\setminus \{0\}.$$ These are strictly increasing smooth functions satisfying $\zeta_k(0)=0$ and $\zeta_k(1)=1$ for any $k$. These maps can be made random by replacing $k$ by an integer-valued random variable $K$. If the distribution of $K$ is symmetric around zero, then it is straightforward to see that $E[\zeta_K(x)]=x$, for all  $x\in[0,1]$, as required in the definition of model \eqref{model}. We generate a discrete family of random maps by the following procedure, which is slightly different from the mixture family of maps introduced in \cite{panaretos2016amplitude}: for $J>1$ let $\{K_j\}^J_{j=1}$ be i.i.d. integer-valued symmetric random variables, and $\{U_{(j )}\}^{J-1}_{j=1}$ be the order statistics of $J-1$ i.i.d. uniform random variables on $[0,1]$, independent of $\{K_j\}^J_{j=1}$. The random maps are then defined as
 $$T_{\epsilon}(x)=\sum_{j=1}^{J-1} I(U_{(j)}\leq x \leq U_{(j+1)})\xi(U_{(j)},U_{(j+1)},K_j) (x)$$
 where $\xi(U_{(j)},U_{(j+1)},K_j) (x)$ is defined as the ratio
 $$\Bigg\{\zeta_{K_j}\bigg(\frac{2x}{U_{(j+1)}-U_{(j)}}-\frac{U_{(j+1)}+U_{(j)}}{U_{(j+1)}-U_{(j)}}\bigg)+\frac{U_{(j+1)}+U_{(j)}}{U_{(j+1)}-U_{(j)}}\Bigg\} \bigg/ \Bigg(\frac{2}{U_{(j+1)}-U_{(j)}}\Bigg).$$
As for the optimal map $T_0$ constituting the regression operator, we set $T_0 = \zeta_4$. After having generated the random $\mu_i$ and $T_{\epsilon_i}$, we generate the response distributions according to model \eqref{model}, i.e. $\nu_i = T_{\epsilon_i}\#T\# \mu_i$. Figure (\ref{fig:densities}) depicts representative sample pairs of predictor and response densities.

\begin{figure}[t!]
\centering
 \includegraphics[scale=0.70]{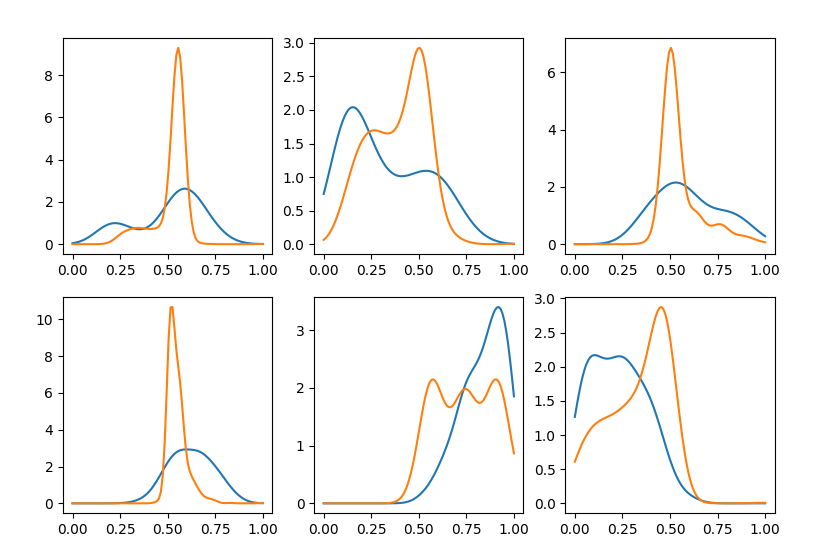}
 \caption{Examples of simulated predictor (blue) and corresponding response (orange) densities.}
 \label{fig:densities}
\end{figure}

\begin{figure}[t!]
\centering
 \includegraphics[scale=0.55]{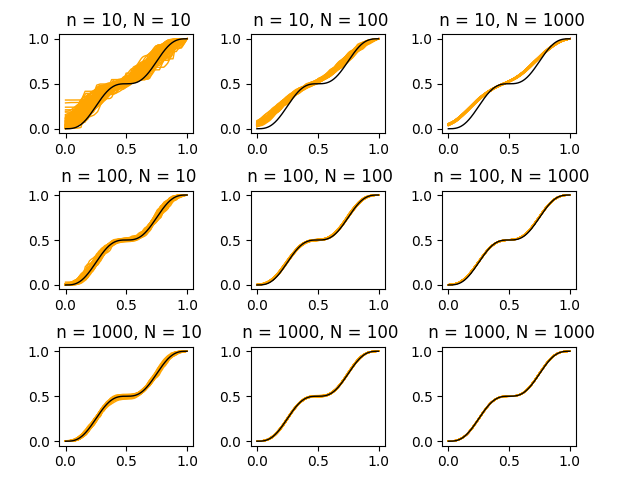}
 \caption{Estimated (yellow) versus true (black) regression map for each of 100 replications of the combinations of $N \in \{10,100,1000\}$ and $n\in \{10,100,1000\}$.}
 \label{fig:nN}
\end{figure}

\begin{figure}[t!]
\centering
 \includegraphics[scale=0.65]{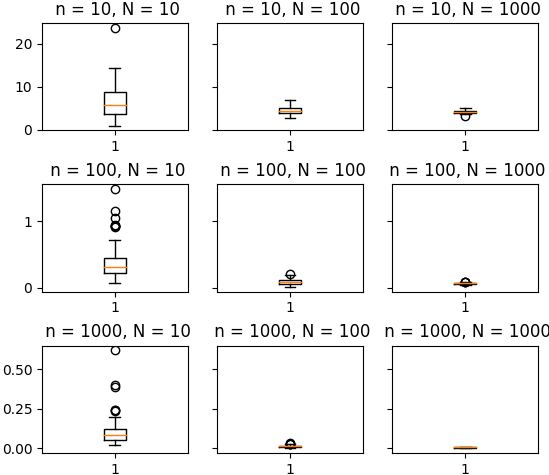}
 \caption{Boxplots for the squared $L^2$ deviation between the  true regression map and the estimated regression maps based on 100 replications for the nine combinations of $N \in \{10,100,1000\}$ and $n\in \{10,100,1000\}$. The $y$-axis scale is common for different values of $N$.}
 \label{fig:nNBoxplot}
\end{figure}

For estimation,  we consider the case where we only observe $n$ independent samples from each pair of distributions $(\mu_i,\nu_i)^N_{i=1}$. For simplicity, we use kernel density estimation, rather than the estimators in \cite{weed2019estimation}, to obtain the proxies $\mu_i^n$ and $\nu_i^n$ for the distributions $\mu_i$ and $\nu_i$. Subsequently, for each $i$, we estimate $Q_i^n$, where $Q_i^n$ is the optimal map such that $\nu_i^n= Q_i\# \mu_i^n$  and solve the convex optimisation problem described in Section \ref{computation} to obtain the estimator $\hat{T}_{n,N}$.  Figure (\ref{fig:nN}) contrasts the estimated and true regression maps in each replication, for all nine combinations $N \in \{10,100,1000\}$ and $n\in \{10,100,1000\}$. It is apparent that the dominant source of error is the bias due to partial observation, i.e. due to observing the measures through finite samples of size $n$. When $n$ is moderately large (e.g. $n=100$) we see that the agreement between estimated and true map is very good, even for small values of $N$. To quantitatively summarise the behaviour of the mean squared error in $N$, we construct boxplots for the error $\|{\hat{T}_{n,N}-T_0}\|_{L^2}$ in Figure (\ref{fig:nNBoxplot}), each based on 100 replications for the corresponding combination of $n\in\{10,100,1000\}$ and $N\in\{10,100,1000\}$. The scale used is the same for each value of $n$, in order to focus the behaviour with respect to $N$.

\section{Analysis of Mortality Data}\label{mortality_data}

We consider the age-at-death distributions for $N=37$ countries in the years 1983 and 2013, obtained from the Human Mortality Database of UC Berkeley and the Max Planck Institute for Demographic Research, openly accessible on \texttt{www.mortality.org}. Death rates are provided by single years of age up to 109, with an open age interval for 110+. We use Gaussian kernel density smoothing, to obtain age-at-death densities from the count data. Denote by $\mu_i$ the age-at-death distribution for the $i$th country at year 1983 and $\nu_i$ the  age-at-death distribution for the same country at year 2013. We use the distributions $\mu_i$ and $\nu_i$ as predictor and response distributions respectively. We chose these two years to allow comparison with \citet{chen2020wasserstein}, who illustrate their methodology on the same data set, and same pair of years.

\begin{figure}[t!]
\centering
\includegraphics[scale = 0.5]{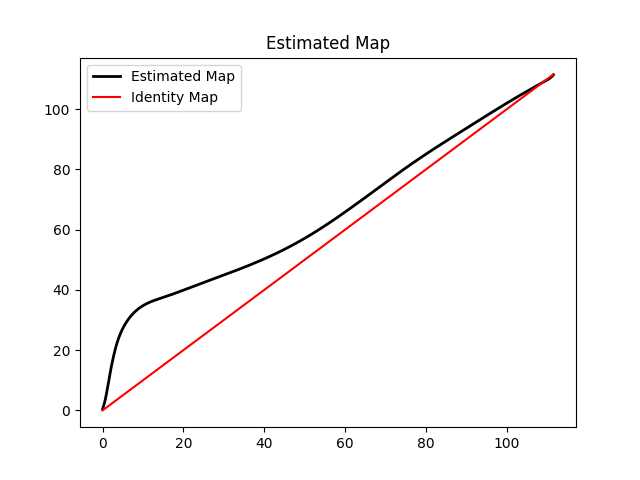}
\caption{Estimated Regression Map for the age-at-death distributional regression (black) contrasted to the identity (red).}
\label{fig:estimatedMap}
\end{figure}

\begin{figure}[t!]
\centering
\includegraphics[scale=0.5]{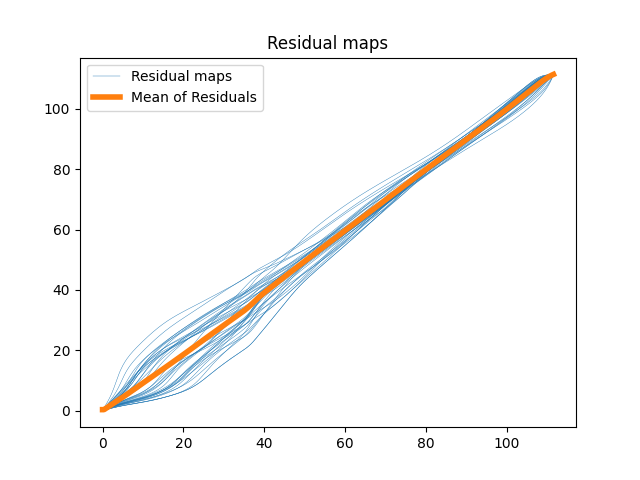}
\caption{Residual maps of all the 37 countries (blue) and their average map (orange).}
\label{fig:diagnostics}
\end{figure}

We fit the model (\ref{model}) by means of the approach described in Section \ref{computation} to obtain the estimated regression map based on the $N=37$ countries. This is depicted in Figure \ref{fig:estimatedMap}. The map dominates the identity map pointwise, indicating that the regression effect is to transport the mass of the age-at-death distribution to the right at visually all locations. Said differently, the map indicates an effect of net improvement in mortality across all ages. The most pronounced such effect is observed in young ages (between 0-10), where the regression map rises steeply: The proportion of the population dying at ages 0-10 in 1983 is redistributed approximately over the range 0-30 in 2013. The form of the map restricted to $[0,10]\mapsto [0,30]$ is approximately linear, indicating that this redistribution is achieved by conserving the actual shape of the distribution but scaling by a constant. The effect is still visible though less pronounced in the early adult to middle age range: The proportion of the population dying at ages between 20 and 60 in 1983 is approximately redistributed over ages 40-60 in 2013. The regression map is approximately parallel to the identity map on the range 60-80, shifted upwards by about 10 years indicating a translation of that interval by that amount of years between 1983 and 2013, i.e. the proportion of the population dying between 60-80 in 1983 has shifted to ages 70-90, but the shape of the distribution of that proportion over each of these two 20 year periods is approximately conserved. Overall, the regression map approximately resembles a piecewise linear map, allowing to interpret it locally by translations and dilations. 

\begin{figure}[t!]
\centering
\includegraphics[width=0.85\textwidth]{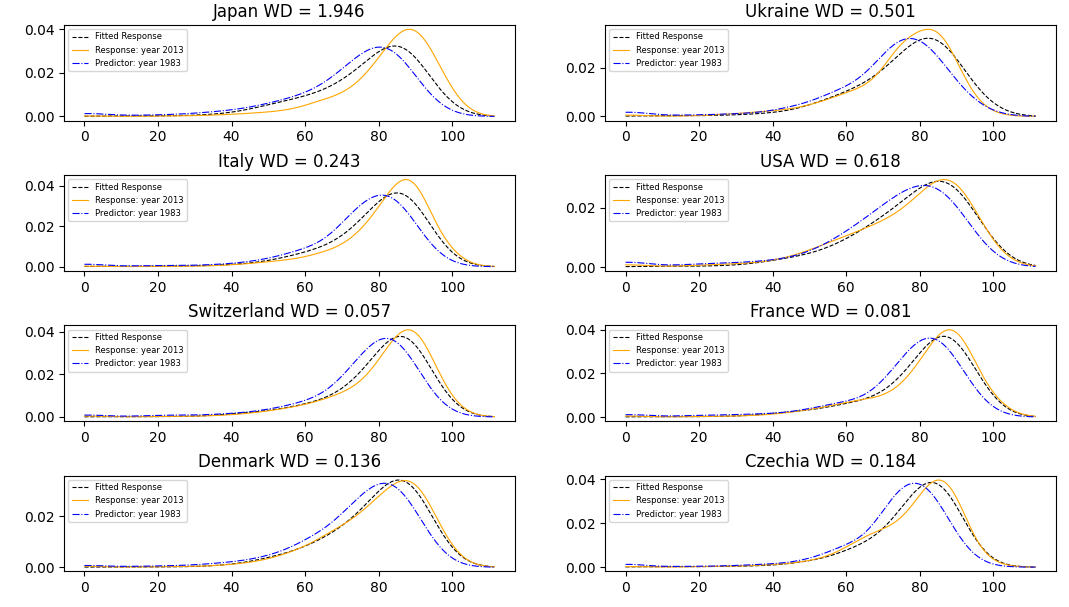}
\caption{Distribution-on-distribution regression for the mortality distributions of Japan, Ukraine, Italy and USA in the year 2013 on those in 1983. Here WD stands for the Wasserstein distance between the observed and fitted densities at year 2013, indicating goodness-of-fit.}
\label{fig:8countries}
\end{figure}

\begin{figure}[t!]
\centering
\includegraphics[width=0.85\textwidth]{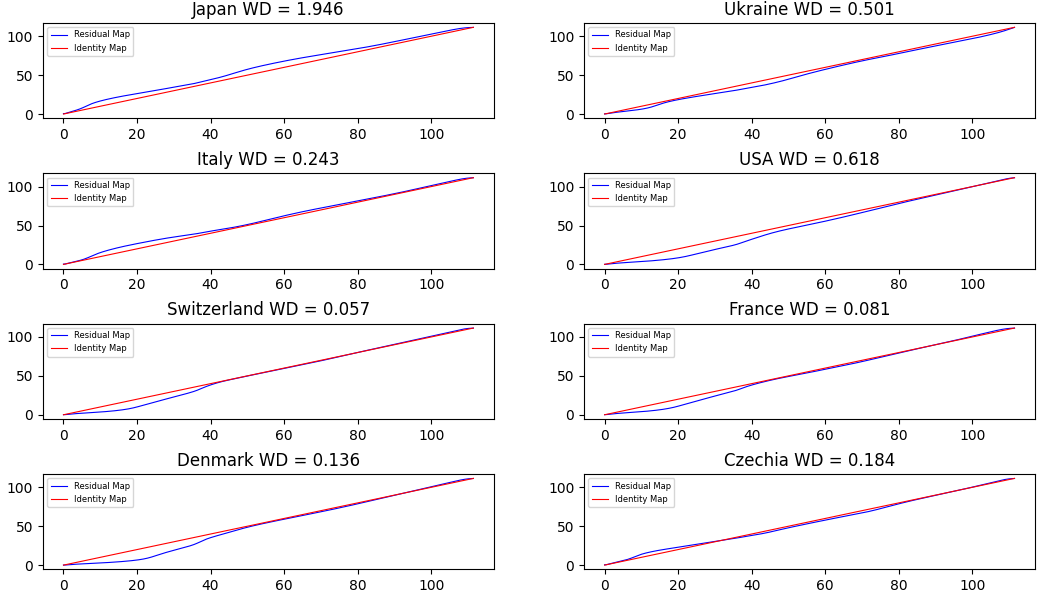}
\caption{Residual maps $T_{\hat{\nu}_i\to\nu_i}$ (blue) vs Identity map (red) for the eight countries in Figure \ref{fig:8countries}.}
\label{fig:8residuals}
\end{figure}

It is not easy to directly compare the effects expressed via this estimated regression map with the effects reflected by the estimated regression coefficient function $\hat\beta$, that is, the integral kernel of the operator $\mathcal{B}$ in Equation \eqref{muller-regressor} obtained in \citet[Figure 3]{chen2020wasserstein}, when fitting their model to the same data. This is largely due to fact that the $\hat{\beta}$ acts on tangent space elements, and thus is rather subtle to interpret. In interpreting their estimated regression operator, those authors remarked that the estimated $\hat{\beta}(s,t)$ was stratified according to the $s$ argument so that, ``\emph{if the log-transformed predictor is non-negative or non-positive throughout its domain, then the fit for the log-transformed response is determined by the comparison of the absolute values of the log transformed predictor over the positive and negative strata of the estimated coefficient $\beta(\cdot,t)$}". 

Using the estimated map $\hat{T}_N$ we can then compute the fitted age-at-death distributions for the year 2013, namely $\hat{\nu}_i=\hat{T}_N\#\mu_i$. {Figure (\ref{fig:8countries}) depicts the predictor and response densities as well as fitted response densities for a sample of 8 different countries. The first four of these countries  (Japan, Ukraine, Italy and USA) were also selected as representative examples in \citet{chen2020wasserstein}.} All eight countries exhibit a negatively-skewed age-at-death distribution. Comparing the actual distributions for the years 1983 and 2013 we can observe the decreasing trend in infant death counts and peaks shifting to older ages, as dictated by the fitted regression map. Contrasting observed and fitted distributions for 2013 allows for better comparison with the model output in \cite{chen2020wasserstein}, than does comparing the estimated regression operators.

Indeed, the main observations made in \cite{chen2020wasserstein} are also apparent from our fitted model. In the case of our model, besides looking at the shape of the predicted densities, we can also take advantage of the direct interpretability of the residual maps $T_{e_i}=T_{\hat{\nu}_i\to \nu_i}$, where $T_{\hat{\nu}_i\to \nu_i}$ is the optimal map between the fitted response $\hat{\nu}_i$ and actual response $\nu_i$. The collection of residual maps is plotted in Figure (\ref{fig:8residuals}). It is apparent that the pointwise variability declines for progressively older ages, illustrating that it is harder to fit mortality at younger ages. One can then focus on the residual maps of specific countries. For example, doing so in the case of Japan and Ukraine, we reproduce the observation in \cite{chen2020wasserstein} that ``\emph{for Japan, the rightward mortality shift is seen to be more expressed than suggested by the fitted model, so that longevity extension is more than is anticipated, while the mortality distribution for Ukraine seems to shift to the right at a slower pace than the fitted model would suggest}". Similarly, we recover the same inference as in \cite{chen2020wasserstein} regarding the US:  ``\emph{while the evolution of the mortality distributions for Japan and Ukraine can be viewed as mainly a rightward shift over calendar years, this is not the case for USA, where compared with the fitted response, the actual rightward shift of the mortality distribution seems to be accelerated for those above age 75} [note: 65 in our case]\emph{, and decelerated for those below age 70 }[note: 65 in our case]". In terms of fit as measured by the Wasserstein distance between response and fit, both models have a harder time fitting Japan, ours doing slightly worse. On the flip side, our model fits Italy better, and the US and Ukraine considerably better (we only contrast countries explicitly mentioned in \cite{chen2020wasserstein}).

Figure \ref{fig:diagnostics} features the overlay of all residual maps, in order to explore the goodness-of-fit of the model as well as the validity of the model assumptions. As the figure shows, the mean of residuals almost matches the identity map, which provides evidence in support of our model specification, in that the residual effects after correcting for the regression should have mean identity, reflected by the assumption that $E\{T_\epsilon(x)\}=x$. Note that, contrary to usual least squares where the residuals have empirical mean zero, the residual maps need not have mean identity exactly.

\begin{figure}[ht!]
\begin{subfigure}{\textwidth}
  \centering
\includegraphics[scale = 1]{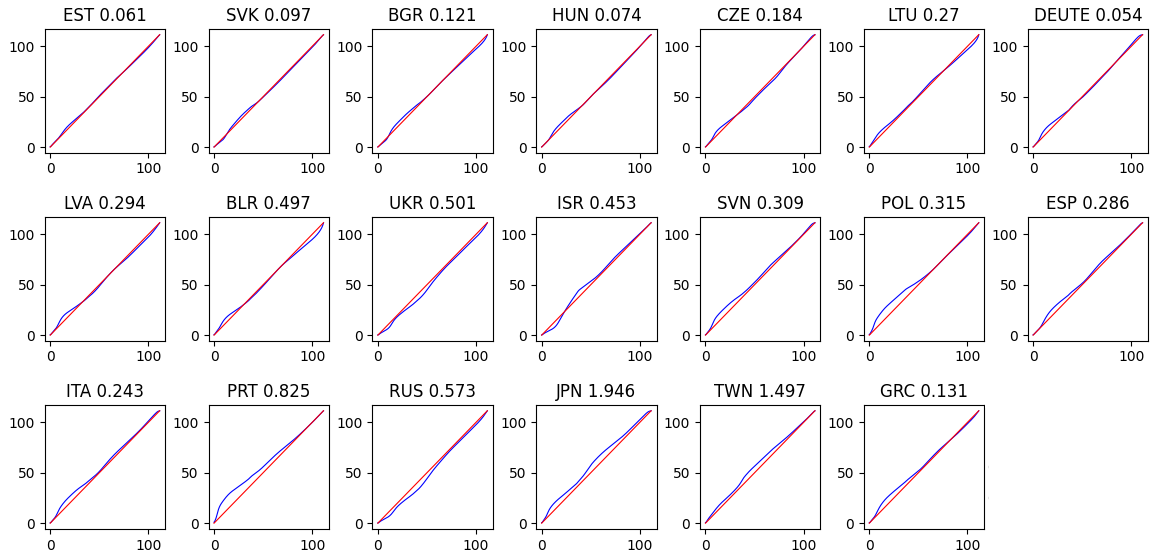}
\caption{}
\label{fig:residuals-east}
\end{subfigure}
\newline
\begin{subfigure}{\textwidth}
  \centering
 \includegraphics[scale = 1]{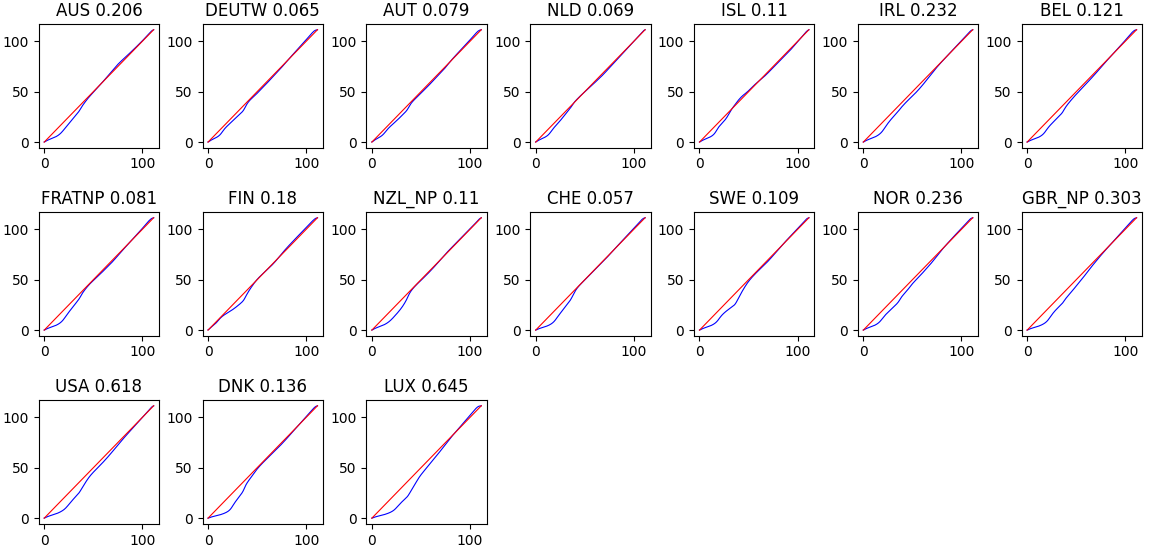}
\caption{}
\label{fig:residuals-west}
\end{subfigure}
\caption{Residual maps (blue), the identity map (red) and the Wasserstein distance between the observed and fitted densities at year 2013 for each country. The countries are clustered in two groups (a) and (b). The list of abbreviations can be found in Table \ref{country-list} in the Supplement.}
\end{figure}

Finally, we can scrutinise the individual residual maps for each of the 37 countries which we plot separately in Figures (\ref{fig:residuals-east}) and (\ref{fig:residuals-west}). The separation into two figures is deliberate, and is based on an apparent clustering: in Figure (\ref{fig:residuals-east}) one can observe more of a rightward shift of fitted mortalities compared to the observed moralities for the countries concerned. This contrasts to countries in Figure (\ref{fig:residuals-west}) which feature less of a rightward shift than fitted by the model. In a sense, these are clusters of ``underfitted" and ``overfitted" observations.  Interestingly countries in Figure (\ref{fig:residuals-east}) belong mostly to Eastern Europe plus Portugal, Spain, Italy, Israel, Japan and Taiwan. Countries in figure (\ref{fig:residuals-west}) belong to western/northern European countries plus USA, New Zealand and Australia. Thanks to the pointwise interpretability of the residual maps, one can notice a particular contrast between these two groups of countries in terms of their fitted/observed infant mortality rates. This may be related to the fact that countries in Figure (\ref{fig:residuals-east}) experienced a more pronounced improvement in their health care systems over the period 1983-2013, compared countries in Figure (\ref{fig:residuals-west}) where healthcare was of comparably high quality already in 1983. It is interesting to note that Japan and Taiwan feature residual maps that everywhere dominate the identity.

\section{Proofs}

\begin{proof}[Proof of Lemma \ref{functional-convexity-derivative}]
Using the closed form of optimal transport maps when $d=1$, one can write:
$$M(T)=\frac{1}{2}\int \int_0^1 \big|T\{F^{-1}_{\mu}(p)\}-F^{-1}_{\nu}(p)\big|^2 \diff p \diff P(\mu,\nu).$$
The expression above shows that $M$ is convex with respect to $T$ since the map $x\to x^2$ is convex and also integration preserves convexity. To show the strict convexity we should prove that for all $0<\beta<1$ and all $T_1,T_2$ such that $\norm{T_1-T_2}^2_{L^2(Q)}>0$,
$$M\big\{\beta T_1 +(1-\beta)T_2\big\} < \beta M(T_1) + (1-\beta) M(T_2).$$
In fact by expanding the squares in the equality and doing some algebra one can conclude that the equality happens if and only if $\norm{T_1-T_2}^2_{L^2(Q)}=0$. Thus $M$, and similarly $M_N$, are strictly convex.

Notice that the domain of definition of $M$ can be extended to the space of $L^2(Q)$ functions. Therefore the Gateaux derivative of $M$ in the direction of $\eta \in L^2(Q)$ can be defined as:
\begin{equation*}
    \begin{split}
        D_\eta M(T) &= \lim_{\epsilon \to 0} \frac{M(T+\epsilon \eta)-M(T)}{\epsilon}.
    \end{split}
\end{equation*}
Expanding the first term we have:
\begin{equation}{\label{perturb}}
    \begin{split}
        M(T+\epsilon \eta) &= M(T) + \epsilon \int \int_0^1 \Big[T\{F^{-1}_{\mu}(p)\}-F^{-1}_{\nu}(p)\Big]\eta\{F^{-1}_{\mu}(p)\}\diff p \diff P(\mu,\nu)\\
        &\quad +\frac{\epsilon^2}{2} \int \int_0^1 \big|\eta\{F^{-1}_{\mu}(p)\}\big| \diff x \diff P(\mu,\nu)\\
        &= M(T) +\epsilon\int <T-F^{-1}_{\nu}\circ F_{\mu},\eta>_{L^2(\mu)}\diff P(\mu,\nu) +\frac{\epsilon^2}{2} \int \norm{\eta}^2_{L^2(\mu)}\diff P(\mu)\\
         &= M(T) +\epsilon\int <T-F^{-1}_{\nu}\circ F_{\mu},\eta>_{L^2(\mu)}\diff P(\mu,\nu) +\frac{\epsilon^2}{2}  \norm{\eta}^2_{L^2(Q)}.
    \end{split}
\end{equation}
The last equality is true since

\begin{equation*}
    \begin{split}
     \int \norm{\eta}^2_{L^2(\mu)} \diff P(\mu) &=\int \int_{\Omega} |\eta(x)|^2 \diff \mu(x) \diff P(\mu,\nu)\\
        &=\int \int_{\Omega} |\eta(x)|^2 \diff Q(x)\\
        &=\norm{\eta}^2_{L^2(Q)}.
    \end{split}
\end{equation*}
Since $\norm{\eta}^2_{L^2(Q)}<\infty$, we can conclude
$$D_{\eta} M(T)=\int <T-F^{-1}_{\nu}\circ F_{\mu},\eta>_{L^2(\mu)}\diff P(\mu,\nu)=\int \int_{\Omega} \{T(x)-T_{\mu,\nu}(x)\}\eta(x)\diff \mu(x)\diff P(\mu,\nu),$$
where $T_{\mu,\nu}$ is the optimal map from $\mu$ to $\nu$. One can use a similar argument to derive the derivative of $M_N$.

\end{proof}

\begin{proof}[Proof of Theorem \ref{identifiability}]

We prove that $T_0$ is the unique minimizer of the population functional in $\mathcal{T}$. Suppose $\nu=T_{\epsilon}\#(T_0\#\mu_0)$ for some fixed measure $\mu_0$, where by assumption $\E\{T_{\epsilon}(x)\}=x$ almost everywhere. Thus according to Proposition 3.2.11 of  \cite{panaretos2020invitation}, $T_0 \# \mu_0$ is the Fr\'echet mean of the conditional probability law of $\nu$ given $\mu_0$ or equivalently, for any $\mu_0$

$${\arg\inf}_{b\in \mathcal{W}_2(\Omega)} \int_{\mathcal{W}_2(\Omega)} d^2_{\mathcal{W}}(b,\nu) \diff P(\nu|\mu_0)=T_0\# \mu_0,$$
where $P$ is the joint distribution of $(\mu,\nu)$ induced by Model \eqref{model}.
Now $T_0$ is a minimizer of the above functional, since for any $T$:
\begin{equation*}
    \begin{split}
        M(T)&=\int d^2_{\mathcal{W}}(T\# \mu,\nu) \diff P(\mu,\nu)\\
        &=\int \int d^2_{\mathcal{W}}(T\#\mu_0,\nu)\diff P(\nu|\mu_0) \diff P(\mu_0)\\
        &\geq \int \int d^2_{\mathcal{W}}(T_0\#\mu_0,\nu)\diff P(\nu|\mu_0) \diff P(\mu_0)\\
        &=\int d^2_{\mathcal{W}}(T_0\# \mu,\nu) \diff P(\mu,\nu).
    \end{split}
\end{equation*}
Also since $d^2_{\mathcal{W}}(T\# \mu,\nu)$ is strictly convex w.r.t. $T \in \mathcal{T}$, and integration preserves strict convexity, the functional $M$ is strictly convex. So $T_0$ is, in fact, the \emph{unique} minimizer.
\end{proof}

\noindent To establish Proposition \ref{Unique-minimizer}, we will use the following theorem.

\begin{theorem}[\citet{kurdila2006convex}, Theorem 7.3.6]{\label{theorem for existence}}
 Let $X$ be a reflexive Banach space and suppose that $f:M\subseteq X \to \mathbb{R}$ is Gateaux-differentiable on the closed, convex and bounded subset $M$. If any of the following three conditions holds true,
\begin{enumerate}
    \item $f$ is convex over $M$,
    \item $Df$ is monotone over $M$,
    \item $D^2 f$ is positive over $M$,
\end{enumerate}
then all three conditions hold, and there exists an $x_0\in X$ such that
$$f(x_0)=\inf_{x\in M} f(x).$$
\end{theorem}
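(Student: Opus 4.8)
The plan is to prove the statement in two stages: first the equivalence of the three curvature conditions (1)--(3), and then the existence of a minimiser, which is a generalised Weierstrass argument exploiting reflexivity.

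For the equivalence, I would reduce everything to one real variable by fixing $x,y\in M$ and studying $\phi(t)=f\big((1-t)x+ty\big)$ for $t\in[0,1]$; convexity of $M$ guarantees the segment stays in $M$, so $\phi$ is well-defined. By the chain rule for Gateaux derivatives, $\phi'(t)=\langle Df((1-t)x+ty),\,y-x\rangle$, and when the second Gateaux derivative exists, $\phi''(t)=D^2f((1-t)x+ty)[y-x,\,y-x]$. The assertion that $f$ is convex on $M$ is exactly the assertion that every such $\phi$ is convex on $[0,1]$; for a once-differentiable $\phi$ this is equivalent to $\phi'$ being nondecreasing, and (under existence of $\phi''$) to $\phi''\ge 0$. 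Translating back, $\phi'$ nondecreasing along every segment is precisely monotonicity of $Df$, namely $\langle Df(x)-Df(y),\,x-y\rangle\ge 0$, and $\phi''\ge 0$ along every segment is precisely positivity of $D^2f$; this closes the cycle (1)$\Leftrightarrow$(2)$\Leftrightarrow$(3). The only mildly delicate point is the first-order characterisation $f(y)\ge f(x)+\langle Df(x),\,y-x\rangle$, which I would extract from convexity of $\phi$ via $\phi(1)\ge\phi(0)+\phi'(0)$; this inequality links the analytic derivative conditions to the geometric definition of convexity and will be reused in the existence step.

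For existence, assume (by the equivalence just established) that $f$ is convex on $M$. The key structural observation is that the first-order characterisation exhibits $f$ as a pointwise supremum of affine functions, $f(y)=\sup_{x\in M}\big[f(x)+\langle Df(x),\,y-x\rangle\big]$, where $\ge$ is the first-order bound and equality follows by taking $x=y$. Each map $y\mapsto f(x)+\langle Df(x),\,y-x\rangle$ is affine and norm-continuous, hence weakly continuous; therefore $f$, being a supremum of weakly lower semicontinuous functions, is itself weakly lower semicontinuous on $M$. I then invoke reflexivity: since $X$ is reflexive and $M$ is bounded, $M$ is relatively weakly compact (Banach--Alaoglu together with reflexivity), and since $M$ is convex and norm-closed it is weakly closed by Mazur's theorem, so $M$ is weakly compact. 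Finally, a weakly lower semicontinuous function on a weakly compact set attains its infimum: taking a minimising sequence $(x_n)\subset M$ with $f(x_n)\to\inf_{M}f$, the Eberlein--\v{S}mulian theorem yields a weakly convergent subsequence $x_{n_k}\rightharpoonup x_0$, with $x_0\in M$ since $M$ is weakly closed, whence $f(x_0)\le\liminf_k f(x_{n_k})=\inf_{M}f$, so $x_0$ is the desired minimiser.

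The main obstacle is the existence step rather than the equivalence: passing from a minimising sequence to an attained minimum requires compactness, and in infinite dimensions norm-bounded sets are not norm-compact, so the whole argument hinges on replacing the norm topology by the weak topology. Reflexivity is exactly what rescues this, but it forces me to verify lower semicontinuity in the weak topology, not merely in the norm topology -- and here convexity is indispensable, since a merely continuous function need not be weakly lower semicontinuous. The representation of $f$ as a supremum of weakly continuous affine minorants is the cleanest route to weak lower semicontinuity, and is the conceptual heart of the proof.
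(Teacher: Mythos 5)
This statement is quoted verbatim from \citet{kurdila2006convex} (Theorem 7.3.6) and is used as a black box in the proof of Proposition \ref{Unique-minimizer}; the paper itself contains no proof of it, so there is no internal argument to compare against, and your proposal should be measured against the standard textbook proof --- which it essentially reproduces, correctly. Your reduction of the equivalences (1)$\Leftrightarrow$(2)$\Leftrightarrow$(3) to the scalar functions $\phi(t)=f\big((1-t)x+ty\big)$ is the canonical device, and the first-order inequality $f(y)\ge f(x)+\langle Df(x),y-x\rangle$ extracted via $\phi(1)\ge\phi(0)+\phi'(0)$ is exactly the bridge needed for the existence step; that step is the direct method: weak compactness of $M$ (reflexivity for weak compactness of bounded sets, Mazur for weak closedness of the norm-closed convex set $M$, Eberlein--\v{S}mulian for sequential extraction) combined with weak lower semicontinuity of $f$. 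Two small points merit attention. First, condition (3) presupposes second-order Gateaux differentiability, which the stated hypotheses do not supply; your parenthetical ``when the second Gateaux derivative exists'' is the right reading, but you should say explicitly that (1)$\Leftrightarrow$(2) needs only first-order differentiability, while any implication involving (3) carries the additional smoothness hypothesis. Second, weak continuity of the affine minorants $y\mapsto f(x)+\langle Df(x),y-x\rangle$ requires $Df(x)\in X^*$, i.e.\ that the Gateaux differential is a \emph{bounded} linear functional --- this is part of the definition used in the cited source, but is worth flagging, since mere existence of directional derivatives linear in the direction would not suffice. Granting these conventions, your argument is complete; indeed, your route to weak lower semicontinuity via the supremum-of-affine-minorants representation is the right choice here, since the more common argument (``convex plus norm-lsc implies weakly lsc'' via sublevel sets) would require norm lower semicontinuity of $f$, which Gateaux differentiability alone does not provide.
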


\begin{proof}[Proof of Proposition \ref{Unique-minimizer}]
The set of maps $\mathcal{T}$ is  closed, convex and bounded in the Hilbert space of $L^2(Q)$ functions.
Thus the existence follows immediately from \eqref{functional-convexity-derivative} and Theorem \ref{theorem for existence}.  Uniqueness also follows from strict convexity of $M$.
\end{proof}

To establish the consistency and rate of convergence of our estimator, we will make use of the theory of $M$-estimation. To this aim, we restate some key theorems from \citet{van1996weak}.

\begin{theorem}[\citet{van1996weak}, Theorem 3.2.3]{\label{M-estimation}}
Let $M_n$ be random functions for positive integer $n$, and let $M$ be a fixed function of $\theta$ such that for any $\epsilon>0$
\begin{equation}{\label{uniform convergence}}
    \inf_{d(\theta,\theta_0)\geq \epsilon} M(\theta)> M(\theta_0),
\end{equation}
\begin{equation}{\label{uniqueness-theta}}
   \sup_{\theta}|M_n(\theta)-M(\theta)|\to 0 \quad \text{in probability}.
\end{equation}
Then any sequence of estimators $\hat{\theta}_n$ with $M_n(\hat{\theta}_n)\leq M_n(\theta_0)+o_P(1)$ converges in probability to $\theta_0$.
\end{theorem}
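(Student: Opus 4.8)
The plan is to prove this in the standard $M$-estimation fashion: translate closeness in the parameter metric $d$ into closeness of the criterion values via the well-separation hypothesis \eqref{uniform convergence}, and then control the criterion values using the uniform convergence \eqref{uniqueness-theta} together with the approximate-minimisation property $M_n(\hat\theta_n)\le M_n(\theta_0)+o_P(1)$. It suffices to show that for every fixed $\epsilon>0$ one has $P\{d(\hat\theta_n,\theta_0)\ge\epsilon\}\to0$.

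First I would fix $\epsilon>0$ and use \eqref{uniform convergence} to extract a strictly positive gap. Setting
$$\delta:=\inf_{d(\theta,\theta_0)\ge\epsilon}M(\theta)-M(\theta_0),$$
well-separation guarantees $\delta>0$, and by construction the event of interest is contained in a level set of the limiting criterion,
$$\{d(\hat\theta_n,\theta_0)\ge\epsilon\}\subseteq\{M(\hat\theta_n)-M(\theta_0)\ge\delta\}.$$
Hence it is enough to show that $M(\hat\theta_n)-M(\theta_0)$ falls below $\delta$ with probability tending to one; in fact I will show this difference is $o_P(1)$.

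The central step is the telescoping decomposition
$$M(\hat\theta_n)-M(\theta_0)=\{M(\hat\theta_n)-M_n(\hat\theta_n)\}+\{M_n(\hat\theta_n)-M_n(\theta_0)\}+\{M_n(\theta_0)-M(\theta_0)\}.$$
The first and third brackets are each bounded in absolute value by $\sup_\theta|M_n(\theta)-M(\theta)|$, which tends to zero in probability by \eqref{uniqueness-theta}; crucially, the \emph{supremum} bound is what lets me control the first bracket even though it is evaluated at the random, data-dependent point $\hat\theta_n$. The middle bracket is at most $o_P(1)$ by the approximate-minimisation hypothesis. Combining, $M(\hat\theta_n)-M(\theta_0)\le 2\sup_\theta|M_n(\theta)-M(\theta)|+o_P(1)=o_P(1)$.

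Putting the pieces together, $P\{d(\hat\theta_n,\theta_0)\ge\epsilon\}\le P\{M(\hat\theta_n)-M(\theta_0)\ge\delta\}\le P\{2\sup_\theta|M_n(\theta)-M(\theta)|+o_P(1)\ge\delta\}\to0$, which is the claim. The step I expect to require the most care is precisely the passage from the parameter metric to the value gap: it relies essentially on \emph{well-separation} rather than mere pointwise identifiability, since without a uniform separation of $M(\theta_0)$ from the values of $M$ outside an $\epsilon$-ball, an approximate minimiser could drift away from $\theta_0$ while still driving the criterion toward its minimum. A secondary subtlety worth flagging is that neither existence nor measurability of an exact minimiser is invoked anywhere: the argument applies verbatim to any sequence $\hat\theta_n$ satisfying the approximate-minimisation inequality, which is exactly the generality in which the theorem is stated and the generality we will need when applying it to $\hat T_N$.
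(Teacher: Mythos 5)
Your proof is correct. Note that the paper itself offers no proof of this statement: it is quoted verbatim as Theorem 3.2.3 of van der Vaart and Wellner (1996) and used as an imported tool, so there is nothing internal to compare against. Your argument --- the telescoping decomposition $M(\hat\theta_n)-M(\theta_0)=\{M(\hat\theta_n)-M_n(\hat\theta_n)\}+\{M_n(\hat\theta_n)-M_n(\theta_0)\}+\{M_n(\theta_0)-M(\theta_0)\}$, with the first and third brackets killed by the uniform convergence and the middle one by near-minimisation, followed by the well-separation gap $\delta>0$ to convert the $o_P(1)$ bound on the value gap into convergence in the metric $d$ --- is exactly the standard proof of this result, and your remarks about why the supremum (rather than pointwise convergence) and the well-separation (rather than mere uniqueness of the minimiser) are essential are both on point.
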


\begin{theorem}[\citet{van1996weak}, Theorem 3.2.5]{\label{theorem3.2.5}}
Let $M_N$ be a stochastic process indexed by a metric space $\Theta$, and let $M$ be a deterministic function, such that for every $\theta$ in a neighborhood of $\theta_0$,
$$M(\theta)-M(\theta_0)\gtrsim d^2(\theta,\theta_0).$$
Suppose that, for every $N$ and sufficiently small $\delta$,

$$\E^* \sup_{d^2(\theta,\theta_0)<\delta} \sqrt{N}\big|(M_N-M)(\theta)-(M_N-M)(\theta_0)\big|\lesssim \phi_N(\delta),$$
for functions $\phi_N$ such that $\delta \to \phi_N(\delta)/\delta^\alpha$ is decreasing for some $\alpha<2$ (not depending on $N$). Let
$$r_N^2\phi_N\left(\frac{1}{r_N}\right)\leq \sqrt{N}, \quad \text{for every } N.$$
If the sequence $\hat{\theta}_N$ satisfies $M_N(\hat{\theta}_N)\leq M_N(\theta_0)+O_P(r_N^{-2})$, and converges in outer probability to $\theta_0$, then $r_N d(\hat{\theta}_N,\theta_0)=O^*_P(1)$. If the displayed conditions are valid for every $\theta$ and $\delta$, then the condition that $\hat{\theta}_N$ is consistent is unnecessary.

\end{theorem}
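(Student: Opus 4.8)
The statement is the classical rate theorem for $M$-estimators (essentially Theorem 3.2.5 of van der Vaart \& Wellner), and the natural route is the \emph{peeling} (or slicing) argument. The plan is to decompose the parameter space into dyadic shells centred at $\theta_0$, show that on each shell the near-minimising property of $\hat\theta_N$ forces the centred process $(M_N-M)(\cdot)-(M_N-M)(\theta_0)$ to be abnormally large, and then control the probability of that event using the modulus-of-continuity hypothesis together with Markov's inequality. Summing the resulting bounds over shells yields a geometric series that is summable precisely because $\alpha<2$.

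First I would fix a large integer $J$ and partition $\{\theta : r_N d(\theta,\theta_0)>2^J\}$ into the shells $S_{j,N}=\{\theta : 2^{j-1}<r_N d(\theta,\theta_0)\le 2^{j}\}$ for $j>J$. On the event $\{\hat\theta_N\in S_{j,N}\}$, the near-minimiser hypothesis gives $M_N(\hat\theta_N)-M_N(\theta_0)\le O^*_P(r_N^{-2})$, while the curvature condition gives $M(\hat\theta_N)-M(\theta_0)\gtrsim d^2(\hat\theta_N,\theta_0)\gtrsim 2^{2(j-1)}r_N^{-2}$. Subtracting, the centred process must satisfy $(M_N-M)(\theta_0)-(M_N-M)(\hat\theta_N)\gtrsim 2^{2j}r_N^{-2}$, after absorbing constants and the $O^*_P$ term into a high-probability event. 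Hence
$$P^*(\hat\theta_N\in S_{j,N})\le P^*\!\Big(\sup_{d(\theta,\theta_0)\le 2^{j}/r_N}\big|(M_N-M)(\theta)-(M_N-M)(\theta_0)\big|\gtrsim 2^{2j}r_N^{-2}\Big).$$

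Applying Markov's inequality and the hypothesised modulus bound $\E^*\sup_{d(\theta,\theta_0)<\delta}\sqrt N|\cdots|\lesssim\phi_N(\delta)$, the right-hand side is at most a constant times $\big(\phi_N(2^{j}/r_N)/\sqrt N\big)\big/\big(2^{2j}r_N^{-2}\big)=r_N^2\,\phi_N(2^{j}/r_N)/(2^{2j}\sqrt N)$. Here the subhomogeneity of $\phi_N$ enters: since $\delta\mapsto\phi_N(\delta)/\delta^\alpha$ is decreasing, one has $\phi_N(2^{j}/r_N)\le 2^{j\alpha}\phi_N(1/r_N)$, and the defining inequality $r_N^2\phi_N(1/r_N)\le\sqrt N$ then bounds the last display by $2^{j(\alpha-2)}$. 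Summing over $j>J$ produces the convergent geometric series $\sum_{j>J}2^{j(\alpha-2)}$ (convergent as $\alpha<2$), whose tail tends to $0$ as $J\to\infty$. This is precisely $P^*(r_N d(\hat\theta_N,\theta_0)>2^J)\to 0$, i.e. $r_N d(\hat\theta_N,\theta_0)=O^*_P(1)$.

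Two points require care and form the main obstacle. First, the curvature lower bound holds only in a neighbourhood of $\theta_0$, so the peeling must be confined to shells inside that neighbourhood; the assumed consistency $\hat\theta_N\to\theta_0$ guarantees that the probability of $\hat\theta_N$ lying outside this neighbourhood is negligible, which is exactly why consistency is required -- and why it becomes superfluous when the displayed conditions are assumed to hold globally. Second, the $O^*_P(r_N^{-2})$ in the near-minimiser hypothesis must be handled by intersecting with a high-probability event on which the excess is at most a fixed multiple of $r_N^{-2}$, so that it can be merged into the constant implicit in $\gtrsim$; the remaining measurability issues arising from the suprema over shells are precisely what the outer expectation $\E^*$ and outer probability $P^*$ are designed to absorb.
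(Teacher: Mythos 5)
Your proof is correct: it is precisely the canonical peeling (shelling) argument by which van der Vaart and Wellner prove their Theorem 3.2.5, and the paper itself offers no proof of this statement — it is restated verbatim as an imported tool — so there is nothing in the paper to diverge from. One small remark: the paper's restatement writes the supremum over $d^2(\theta,\theta_0)<\delta$, a typo for $d(\theta,\theta_0)<\delta$ in the original, which your argument implicitly and correctly fixes by peeling over shells in $d$, not $d^2$.
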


\begin{theorem}[\citet{van1996weak}, Theorem 2.7.5]{\label{metric-entropy-monotone}}
The class $\mathcal{F}$ of monotone functions $f:\mathbb{R} \to [0,1]$ satisfies
$$\log N_{[]}(\epsilon,\norm{.}_{L^2(Q)},\mathcal{F})\leq K\left(\frac{1}{\epsilon}\right),$$
for every probability measure $Q$, every $p\geq1$, and a constant $K$ that depends only on $p$.
\end{theorem}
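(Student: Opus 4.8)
The statement is the bracketing-entropy bound for monotone functions of \citet[Theorem 2.7.5]{van1996weak}, so the plan is to reproduce the standard argument. The goal is to exhibit, for each $\epsilon>0$, a family of brackets $[l,u]$ with $\norm{u-l}_{L^p(Q)}\leq\epsilon$ covering $\mathcal{F}$, of cardinality at most $\exp(K/\epsilon)$ with $K=K(p)$. The first move is to remove the dependence on $Q$. Writing $G(x)=Q((-\infty,x])$ for the distribution function of $Q$ and $G^{-1}$ for its quantile function, the map $f\mapsto f\circ G^{-1}$ carries monotone functions on $\mathbb{R}$ into monotone functions on $[0,1]$, and $Q$-almost everywhere one has $G^{-1}\circ G=\mathrm{id}$, so a Lebesgue bracket $[\tilde l,\tilde u]$ for $f\circ G^{-1}$ pulls back to the bracket $[\tilde l\circ G,\tilde u\circ G]$ for $f$. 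Since the pushforward $G\#Q$ is dominated by Lebesgue measure on $[0,1]$ (atoms of $Q$ only shrink it below the uniform law, and hence only decrease the relevant norm), one gets $\norm{\tilde u\circ G-\tilde l\circ G}_{L^p(Q)}\leq\norm{\tilde u-\tilde l}_{L^p(\lambda)}$. It therefore suffices to bound $\log N_{[]}(\epsilon,\norm{\cdot}_{L^p(\lambda)},\mathcal{M})$, where $\mathcal{M}$ is the class of monotone functions $[0,1]\to[0,1]$ and $\lambda$ is Lebesgue measure, \emph{uniformly} in the original $Q$.

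For the reduced problem I would build brackets from level crossings. Fix an integer $K\sim 1/\epsilon$ and levels $a_k=k/K$. For nondecreasing $g\in\mathcal{M}$ set $x_k=\inf\{x:g(x)\geq a_k\}$, so that $0\leq x_1\leq\cdots\leq x_{K-1}\leq 1$ and $g$ is sandwiched between the step functions equal to $a_{k-1}$ and $a_k$ on $[x_{k-1},x_k)$; this sandwich has $L^\infty$-width exactly $1/K$. These exact brackets are too numerous, so I would discretise the configuration of crossing points, rounding each $x_k$ to a grid and retaining only the induced \emph{monotone} sequence of grid indices. The number of such sequences is a count of monotone lattice paths, at most $\binom{2K}{K}\leq 4^{K}$, whence $\log N_{[]}\lesssim K\sim 1/\epsilon$, which is the desired order. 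What remains is to check that the discretised brackets still have $L^p(\lambda)$-width $\lesssim\epsilon$.

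The verification of the bracket width is the crux, and the obstacle is \emph{steep} functions. If $g$ rises by nearly $1$ across a domain interval of length comparable to the grid spacing, then many crossing points collapse into one grid cell; the discretised bracket is then forced to have width of order $1$ on a set of measure of order the grid spacing, and a naive uniform grid at spacing $1/K$ makes this contribute order $1/K$ to $\norm{u-l}_{L^p}^p$, i.e. width of order $K^{-1/p}$ rather than $\epsilon=1/K$. The resolution, and the step I expect to be the most delicate, is a multiscale treatment that exploits monotonicity: since $g$ has total variation at most $1$, it has at most $2^{l}$ increments exceeding $2^{-l}$, so the locations of the large increments at each scale are few and may be pinned to high precision at controlled coding cost, while the residual slowly-varying part is handled by the coarse grid, on which $|u-l|$ is uniformly of order $\epsilon$ and hence contributes order $\epsilon$ directly to the $L^p$ width. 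The technical heart is to balance the per-scale precision against the per-scale coding cost so that the geometric sum of costs stays at order $1/\epsilon$ while $\norm{u-l}_{L^p}\leq\epsilon$, at the price of a constant $K$ growing with $p$; this balancing is exactly what \citet{van1996weak} carry out. Assembling the $Q$-free reduction of the first paragraph with this width control yields $\log N_{[]}(\epsilon,\norm{\cdot}_{L^p(Q)},\mathcal{F})\leq K(1/\epsilon)$ uniformly over probability measures $Q$, as claimed.
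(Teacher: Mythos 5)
First, a point of reference: the paper does not prove this statement at all --- it is imported verbatim as Theorem 2.7.5 of \citet{van1996weak} and used as a black box in the proof of the rate theorem. So there is no ``paper's own proof'' to compare against; your attempt has to be judged against the standard argument in the cited source, which you are partially reconstructing.

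Your sketch has the right architecture (quantile-transform reduction to Lebesgue measure, level-crossing brackets, a lattice-path count giving $\log N_{[]}\lesssim 1/\epsilon$), but it contains one incorrect step and one outsourced step, and together they leave the case the paper actually needs unproved. The incorrect step is the reduction: the claim that $G\#Q$ is dominated by Lebesgue measure, and hence that $\norm{\tilde u\circ G-\tilde l\circ G}_{L^p(Q)}\leq\norm{\tilde u-\tilde l}_{L^p(\lambda)}$, fails whenever $Q$ has atoms. Take $Q=\delta_0$, so $G=\mathbb{1}_{[0,\infty)}$ and $G\#Q=\delta_1$: a bracket pair on $[0,1]$ with tiny $L^p(\lambda)$-width but $\tilde u(1)-\tilde l(1)=1$ pulls back to a bracket of $L^p(Q)$-width $1$. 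The theorem is claimed for \emph{every} probability measure $Q$, so the atomic part must be handled genuinely (e.g.\ by incorporating the atoms into the partition and re-running the monotone-sequence count across them), not waved away; as written the reduction proves the statement only for continuous $Q$. The outsourced step is the crux you yourself identify: the naive uniform grid yields brackets of $L^p$-width of order $\epsilon^{1/p}$, which suffices only for $p=1$, whereas the paper invokes the bound in $L^2(Q)$ --- exactly the regime where the multiscale refinement is indispensable. You correctly diagnose the obstruction (steep functions collapsing many level crossings into one cell) and name the right repair strategy (few large increments per dyadic scale, pinned at higher precision, with a geometric cost--precision trade-off), but you then defer its execution to ``exactly what \citet{van1996weak} carry out.'' Since that balancing is the technical heart of the theorem, the proposal is an accurate roadmap rather than a proof: to stand on its own it would need the per-scale bookkeeping done explicitly and the atomic-$Q$ reduction repaired.
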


\begin{theorem}[\citet{van1996weak}, Theorem 3.4.2]{\label{chaining}}
Let $\mathcal{F}$ be class of measurable functions such that $P f^2 < \delta^2$
and $\norm{f}_{\infty}<M$ for every $f$ in $\mathcal{F}$. Then
$$\E \sup_{f \in \mathcal{F}} |\sqrt{N} (\hat{P}-P)f|\leq \tilde{J}_{[]}(\delta,\norm{.}_{L^2(P)},\mathcal{F})\Bigg(1+\frac{\tilde{J}_{[]}(\delta,\norm{.}_{L^2(P)},\mathcal{F})}{\delta^2 \sqrt{N}} M \Bigg),$$
where $\tilde{J}_{[]}(\delta,\norm{.}_{L^2(P)},\mathcal{F})=\int_0^\delta \sqrt{1+\log N_{[]}(\epsilon,\norm{.}_{L^2(P)},\mathcal{F})} \diff \epsilon$.
\end{theorem}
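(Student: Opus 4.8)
The plan is to prove this maximal inequality by the classical method of \emph{chaining with brackets}, balancing at each resolution a sub-Gaussian (variance-driven) contribution against a sub-exponential (boundedness-driven) contribution, and then summing over resolutions to recover the bracketing entropy integral $\tilde J_{[]}$. First I would set up a nested family of bracketing approximations: for each integer $q\geq 0$ put $\eta_q=2^{-q}\delta$ and fix a minimal collection of $\eta_q$-brackets $[l,u]$ (with $\norm{u-l}_{L^2(P)}\leq\eta_q$) covering $\mathcal F$, of cardinality $N_q:=N_{[]}(\eta_q,\norm{\cdot}_{L^2(P)},\mathcal F)$. To each $f\in\mathcal F$ I associate its bracket at level $q$, writing $\pi_q f$ for the lower endpoint and $\Delta_q f:=u-l\geq 0$ for its width, so that $\norm{\Delta_q f}_{L^2(P)}\leq\eta_q$ and $|f-\pi_q f|\leq\Delta_q f$ pointwise. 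Because every $f$ satisfies $Pf^2<\delta^2$, the coarsest level $q=0$ can be taken with $\norm{f-\pi_0 f}_{L^2(P)}\lesssim\delta$.

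Next I would expand the empirical process along the telescoping chain $f=\pi_0 f+\sum_{q\geq 1}(\pi_q f-\pi_{q-1}f)$, giving
$$\sqrt N(\hat P-P)f=\sqrt N(\hat P-P)\pi_0 f+\sum_{q\geq 1}\sqrt N(\hat P-P)(\pi_q f-\pi_{q-1}f).$$
The heart of the argument is the control of a single increment $g:=\pi_q f-\pi_{q-1}f$. Since consecutive approximations differ by at most the coarser bracket width, one has $|g|\leq\Delta_{q-1}f$ pointwise and $\norm{g}_{L^2(P)}\lesssim\eta_{q-1}$. For a truncation level $a_q>0$ I would split $g=g\,\mathbf 1\{\Delta_{q-1}f\leq a_q\}+g\,\mathbf 1\{\Delta_{q-1}f> a_q\}$, treating the bounded and unbounded pieces separately.

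On the bounded piece the increments range over at most $N_{q-1}N_q\leq N_q^2$ distinct functions, each of sup-norm at most $a_q$ and variance at most $\eta_{q-1}^2$; by Bernstein's inequality combined with the maximal inequality for finitely many variables with a Bernstein-type tail, the expected maximum over the net is bounded by a sub-Gaussian term $\lesssim\eta_{q-1}\sqrt{\log N_q}$ plus a sub-exponential term $\lesssim a_q\log N_q/\sqrt N$. On the unbounded piece I would bound $\sqrt N(\hat P-P)$ crudely by $\sqrt N(\hat P+P)$ and invoke $P\,\Delta_{q-1}f\,\mathbf 1\{\Delta_{q-1}f> a_q\}\leq\eta_{q-1}^2/a_q$ (Cauchy--Schwarz/Markov), which produces the remainder. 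Choosing $a_q\asymp\eta_{q-1}\sqrt N/\sqrt{\log N_q}$ equates the two Bernstein contributions, so that level $q$ costs $\lesssim\eta_{q-1}\sqrt{\log N_q}$, while the large-bracket remainder feeds the multiplicative correction.

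Finally I would sum over $q$. With $\eta_q=2^{-q}\delta$, the dyadic sum $\sum_{q\geq 1}\eta_{q-1}\sqrt{\log N_q}$ is, by the standard comparison of a geometric sum with an integral, dominated by a constant multiple of $\int_0^\delta\sqrt{1+\log N_{[]}(\epsilon,\norm{\cdot}_{L^2(P)},\mathcal F)}\,\diff\epsilon=\tilde J_{[]}(\delta,\norm{\cdot}_{L^2(P)},\mathcal F)$, the "$+1$" in the integrand ensuring the coarsest level $\sqrt N(\hat P-P)\pi_0 f$ is absorbed; this yields the leading factor. The accumulated large-bracket remainders, in which the bound $\norm f_\infty<M$ enters through $a_q$, combine to the correction $1+\tilde J_{[]}M/(\delta^2\sqrt N)$, giving the stated form $\tilde J_{[]}(1+\tilde J_{[]}M/(\delta^2\sqrt N))$. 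I expect the main obstacle to be precisely the scale-by-scale calibration of the truncation thresholds $a_q$: one must choose them so that the two Bernstein terms are simultaneously summable into the single entropy integral \emph{and} so that the unbounded remainders telescope into one correction term of the required shape. This balancing, rather than any individual inequality, is the technical core of the proof.
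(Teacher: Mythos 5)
The paper does not prove this statement: it is quoted verbatim, with attribution, as Theorem 3.4.2 of \citet{van1996weak} (Lemma 3.4.2 in some printings), so there is no internal proof to compare against. Your sketch is essentially the standard textbook proof of that result --- Ossiander-style chaining over dyadic bracketing nets with adaptive truncation, Bernstein's inequality for the bounded increments, and a Markov/Cauchy--Schwarz bound $P\,\Delta_{q-1}f\,\mathbf{1}\{\Delta_{q-1}f>a_q\}\leq \eta_{q-1}^2/a_q$ for the tails --- and it is sound in outline. Three technicalities deserve explicit care if you write it out in full: (i) the pointwise bound $|\pi_q f-\pi_{q-1}f|\leq \Delta_{q-1}f$ requires the bracketing partitions to be \emph{nested}, which one arranges by refining (intersecting) the minimal bracketings, at most multiplying cardinalities and hence harmless at the level of entropy; (ii) the telescoping chain must be stopped at a finite resolution $q_1$, with the residual $|f-\pi_{q_1}f|\leq \Delta_{q_1}f$ absorbed via $\sqrt{N}(\hat P+P)\Delta_{q_1}f$ rather than summed to infinity; and (iii) the multiplicative correction is most cleanly obtained by truncating at a single level $a\asymp \delta^2/\tilde J_{[]}(\delta,\norm{.}_{L^2(P)},\mathcal{F})$, since then $\sqrt{N}\,P\,F\,\mathbf{1}\{F>a\sqrt{N}\}\leq M\,PF^2/(\sqrt{N}a^2)\lesssim M\,\tilde J_{[]}^2/(\delta^2\sqrt{N})$, which is exactly the second factor in the stated bound; your scale-by-scale calibration $a_q\asymp \eta_{q-1}\sqrt{N}/\sqrt{\log N_q}$ is the version used in the cited source and works, but the single-truncation shortcut makes the shape of the $M$-term transparent.
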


\begin{proof}[Proof of Theorem \ref{rate of convergence}]

Recall that, from Lemma \ref{Unique-minimizer}, $\hat{T}_N$ is the minimizer of the following criterion  within the function class $\mathcal{T}$:
$$M_N(T):=\frac{1}{2N}\sum_{i=1}^N d^2_{\mathcal{W}}(T\# \mu_i,\nu_i).$$
And the ``true" optimal map $T_0$ is the minimizer of the following criterion function,
$$M(T):=\frac{1}{2}\int d^2_{\mathcal{W}}(T\# \mu,\nu) \diff P(\mu,\nu).$$
First we obtain an adequate upper bound for the bracketing number of the class of functions indexed by $T$ of the form:
$$\mathcal{F}_u:=\{f_T(\mu,\nu) = d^2_{\mathcal{W}}(T\#\mu,\nu)-d^2_{\mathcal{W}}(T_0\#\mu,\nu) , \text{  s.t.  } T\in \mathcal{T} \text{ and } \norm{T-T_0}_{L^2(Q)}\leq u \},$$
where the domain of each function $f_T \in \mathcal{F}_u$ is $\mathcal{W}_2(\Omega) \times \mathcal{W}_2(\Omega)$.
Denote by $\log N_{[]}(\epsilon,\norm{.}_{L^2(Q)},\mathcal{F}_u)$ the bracketing entropy of the function class $\mathcal{F}_u$.
One can directly control this bracketing entropy by the bracketing entropy of the class of optimal maps $\mathcal{T}$ since
\begin{equation}{\label{lipstchitz}}
    \begin{split}
        |d^2_{\mathcal{W}}(T_1\#\mu,\nu)-d^2_{\mathcal{W}}(T_2\#\mu,\nu)| & \leq \norm{T_1-T_2}_{L^2(\mu)}\\
        & \leq C \norm{T_1-T_2}_{L^2(Q)}.
    \end{split}
\end{equation}
Since optimal maps are monotone functions, using Lemma \ref{metric-entropy-monotone}, we know $\log N_{[]}(\epsilon,\norm{.}_{L^2(Q)},\mathcal{T})\leq K\left(\frac{1}{\epsilon}\right)$, and thus we conclude
$$\log N_{[]}(\epsilon,\norm{.}_{L^2(P)},\mathcal{F}_u)\lesssim\left(\frac{1}{\epsilon}\right).$$
The first line of the inequality (\ref{lipstchitz}) also shows that
$$P f_T^2\leq P \norm{T-T_0}^2_{L^2(\mu)}=\norm{T-T_0}^2_{L^2(Q)}\leq u^2,$$
for all $f_T \in \mathcal{F}_u$.

To get the rate of convergence, we first show that $M(T)$ has quadratic growth around its minimizer. For any map $T$, we can write $T = T_0 + \eta $, where $\eta = T - T_0$. Thus the equation (\ref{perturb}), with $\epsilon=1$ and also the fact $D_\eta M(T_0)=0$ yields

\begin{equation*}
    \begin{split}
        M(T)-M(T_0) &= \frac{1}{2} \norm{\eta}^2_{L^2(Q)}\\
        &=\frac{1}{2}\norm{T-T_0}^2_{L^2(Q)}.
    \end{split}
\end{equation*}
Next, we find a function $\phi_N(\delta)$ such that

\begin{equation*}
    \begin{split}
        \E \sup_{\norm{T-T_0}_{L^2(Q)}\leq \delta, T\in \mathcal{T}} \sqrt{N} \Big|(M_{N}-M)(T)-(M_{N}-M)(T_0)\Big|
        &= \E \sup_{f \in F_\delta} \sqrt{N} |(P_N-P)f|\\
        &\leq \phi_N(\delta).
    \end{split}
\end{equation*}
Since the functions in $\mathcal{F}_\delta$ are uniformly bounded and $Pf^2\leq \delta^2$ for all $f\in\mathcal{F}_\delta$, the conditions of Theorem \ref{chaining} are satisfied and we can choose
$$\phi_N(\delta)=\tilde{J}_{[]}(\delta,\norm{.}_{L^2(P)},\mathcal{F}_\delta)\Bigg(1+\frac{\tilde{J}_{[]}(\delta,\norm{.}_{L^2(P)},\mathcal{F}_\delta)}{\delta^2 \sqrt{N}} \bar{c} \Bigg),$$
where the constant $\bar{c}$ is a uniform upper bound for the functions in class $\mathcal{F}_\delta$. Since we noted that $\log N_{[]}(\epsilon,\norm{.}_{L^2(P)},\mathcal{F}_u)\lesssim \epsilon^{-1}$ for any $u>0$, we can show
$$\tilde{J}_{[]}(\delta,\norm{.}_{L^2(P)},\mathcal{F})\leq \int_0^\delta 1+ \sqrt{\log N_{[]}(\epsilon,\norm{.}_{L^2(P)},\mathcal{F}_\delta)} \diff \epsilon \lesssim \sqrt{\delta}.$$
The above inequality and the required condition $\phi_N(\delta)\leq \delta_N^2 \sqrt{N}$ gives the bound $\delta_N=N^{-1/3}$.

\end{proof}

To establish the rate of convergence under imperfect observation we will make use of the following Lemma.

\begin{lemma}{\label{Figalli}}
Let $\mu_n$ be a sequence of measures converging in Wasserstein distance to a measure $\mu$ at a rate of convergence $r_n^{-1}$ and let $T\in \mathcal{T}$. Then $d^2_{\mathcal{W}}(T\# \mu_n,T\#\mu)\lesssim r_n^{-2}$.
\end{lemma}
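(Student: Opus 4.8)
The plan is to exploit the elementary fact that applying a fixed map to both marginals of a coupling produces a coupling of the two push-forwards, and then to use that a map in the parameter class is Lipschitz. First I would fix an optimal coupling $\gamma\in\Gamma(\mu_n,\mu)$ realising $d^2_{\mathcal{W}}(\mu_n,\mu)=\int_{\Omega\times\Omega}|x-y|^2\diff\gamma(x,y)$, which exists since $\Omega$ is compact. The map $(x,y)\mapsto(T(x),T(y))$ pushes $\gamma$ forward to a probability measure $(T\times T)\#\gamma$ on $\Omega\times\Omega$, and a direct check of its marginals shows $(T\times T)\#\gamma\in\Gamma(T\#\mu_n,T\#\mu)$. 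Since the Wasserstein distance is an infimum over couplings, this feasible (though not necessarily optimal) coupling yields the upper bound
\begin{equation*}
d^2_{\mathcal{W}}(T\#\mu_n,T\#\mu)\leq\int_{\Omega\times\Omega}|T(x)-T(y)|^2\diff\gamma(x,y).
\end{equation*}

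The second step is to control the integrand pointwise. Here I would use that $T$ is monotone with an (essentially) bounded derivative, so that $T$ is Lipschitz: there is a constant $B$ with $|T(x)-T(y)|\leq B|x-y|$. Substituting this into the previous display gives
\begin{equation*}
d^2_{\mathcal{W}}(T\#\mu_n,T\#\mu)\leq B^2\int_{\Omega\times\Omega}|x-y|^2\diff\gamma(x,y)=B^2\,d^2_{\mathcal{W}}(\mu_n,\mu)\lesssim r_n^{-2},
\end{equation*}
where the final bound is just the assumed rate $d_{\mathcal{W}}(\mu_n,\mu)\lesssim r_n^{-1}$. Equivalently, the one-dimensional structure gives the same conclusion even more transparently: since $T$ is nondecreasing, the quantile function of a push-forward factorises as $F^{-1}_{T\#\rho}=T\circ F^{-1}_{\rho}$, so that by \eqref{w1d} one has $d^2_{\mathcal{W}}(T\#\mu_n,T\#\mu)=\int_0^1\big|T\{F^{-1}_{\mu_n}(p)\}-T\{F^{-1}_{\mu}(p)\}\big|^2\diff p$, and the Lipschitz bound again produces the factor $B^2$ in front of $\int_0^1|F^{-1}_{\mu_n}(p)-F^{-1}_{\mu}(p)|^2\diff p=d^2_{\mathcal{W}}(\mu_n,\mu)$.

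The essential point — and the only place genuine content enters — is the Lipschitz estimate on $T$. Monotonicity together with a merely a.e.-finite derivative is \emph{not} enough: for $T(x)=\sqrt{x}$ one can check, taking $\mu_n$ and $\mu$ to be small translates of each other near the origin, that the ratio $d^2_{\mathcal{W}}(T\#\mu_n,T\#\mu)/d^2_{\mathcal{W}}(\mu_n,\mu)$ grows with $n$, so no bound of the asserted form can hold with a constant independent of $n$. The argument therefore relies on a uniform bound $T'\leq B$, which is precisely what is available in the application, where $T\in\mathcal{T}_B$ as in Theorem \ref{estimator-discrete-samples}; the resulting constant depends on $B$ but not on $n$, which is all that is required. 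I expect this Lipschitz step, together with pinning down the sense in which the derivative bound is used, to be the main (and essentially only) obstacle, the coupling and quantile manipulations being routine.
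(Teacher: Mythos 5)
Your proof is correct and follows essentially the same route as the paper's: the paper takes the deterministic optimal coupling given by the optimal map $S_n$ from $\mu_n$ to $\mu$, pushes it forward by $T$, and applies the Lipschitz bound coming from $|T'|\leq B$, which is exactly your argument with a general optimal coupling (and your quantile-function variant is an equivalent reformulation in $d=1$). Your observation that the conclusion genuinely requires $T\in\mathcal{T}_B$ rather than merely $T\in\mathcal{T}$ is also on target --- the paper's own proof invokes $|T'(x)|\leq B$ almost everywhere, which is only available under the hypotheses of Theorem \ref{estimator-discrete-samples}, where the lemma is applied.
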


\begin{proof}
For simplicity and without loss of generality assume that $d^2_{\mathcal{W}}( \mu_n,\mu)=r_n^{-2}$ exactly. If $S_n$ is the optimal map from $\mu_n$ to $\mu$, then
$$
\int \big|S_n(x)-x\big|^2 d\mu_n \leq r_n^{-2}.
$$
Since $T$ is differentiable almost everywhere, and satisfies $|T'(x)| \leq B $ for almost all $x \in \Omega$, then $T$ is Lipschitz continuous with Lipschitz constant at most $B$. Thus

\begin{equation}
    \begin{split}
        d^2_{\mathcal{W}}(T\# \mu_n,T\#\mu)&\leq \int \big|T\{S_n(x)\}-T(x)\big|^2d \mu_n\\
        &\leq B^2 \int \big|S_n(x)-x\big|^{2}d \mu_n   \hspace{10mm} \\
        & \lesssim r_n^{-2}
    \end{split}
\end{equation}

\end{proof}

\begin{proof}[Proof of Theorem \ref{estimator-discrete-samples}]
Define $M_{n,N}(T):=\frac{1}{N}\sum_{i=1}^N d^2_{\mathcal{W}}(T\# \mu_i^n,\nu_i^n)$. 
For any map $T \in \mathcal{T}$,
\begin{equation}{\label{figalli2}}
    \begin{split}
        \E |M_{n,N}(T)-M_N(T)|&=\E\Big|\frac{1}{N}\sum_{i=1}^N d^2_{\mathcal{W}}(T\# \mu_i^n,\nu_i^n)-\frac{1}{N}\sum_{i=1}^N d^2_{\mathcal{W}}(T\# \mu_i,\nu_i)\Big|\\
        &\leq \E\big|d^2_{\mathcal{W}}(T\# \mu_i^n,\nu_i^n)-d^2_{\mathcal{W}}(T\# \mu_i,\nu_i)\big| \\
        &\leq 2C \E\big|d_{\mathcal{W}}(T\# \mu_i^n,\nu_i^n)-d_{\mathcal{W}}(T\# \mu_i^n,\nu_i)\big|+\E\big|d_{\mathcal{W}}(T\# \mu_i^n,\nu_i)-d_{\mathcal{W}}(T\# \mu_i,\nu_i)\big|\\
        &\leq 2C\E d_{\mathcal{W}}( \nu_i^n,\nu_i)+\E d_{\mathcal{W}}(T\# \mu_i^n,T\#\mu_i)\\
        &\lesssim r_n^{-1} \hspace{10mm} \text{ (by Lemma \ref{Figalli})},
    \end{split}
\end{equation}
where $C=\sup_{\mu,\nu} d_{\mathcal{W}}(\mu,\nu)$, and $r_n^{-1}$ is the rate of estimation of an absolutely continuous measure from $n$ samples. Thus the above inequality shows the uniform convergence of $M_{n,N}$ to $M_N$ (at a rate independent of $N$). Also, since $\hat{T}_N$ is the unique minimizer of $M_N$, according to Theorem \ref{M-estimation}, $\hat{T}_{n,N}$ is a consistent estimator for $\hat{T}_N$, when $N$ is fixed.

Now assuming $N$ is fixed, we again use Theorem  \ref{theorem3.2.5} for functionals $M_{n,N}$ and $M_N$. Since both functionals are differentiable, the first condition of the Theorem (quadratic growth) is satisfied. For the second condition we need to find an upper bound for
\begin{equation}
    \begin{split}
      \E \sup_{\norm{T-\hat{T}_N}_{L^2(Q)}<\delta} \sqrt{n} \big|(M_{n,N}-M_N)(T)-(M_{n,N}-M_N)(\hat{T}_{N})\big| &=\phi_n(\delta).
    \end{split}
\end{equation}
According to \eqref{figalli2},
we have $\phi_n(\delta_n)\lesssim r_n^{-1} \sqrt{n}.$
We also need $\phi_n(\delta_n)\leq \sqrt{n} \delta_n^2$, thus $\delta_n^2\sim r_n^{-1}$. Therefore

$$\norm{\hat{T}_{n,N}-\hat{T}_N}_{L^2(Q)}=\delta_n=r_n^{-1/2},$$
and
$$\norm{\hat{T}_{n,N}-T_0}_{L^2(Q)}\leq \norm{\hat{T}_{n,N}-\hat{T}_N}_{L^2(Q)}+\norm{\hat{T}_{N}-T_0}_{L^2(Q)},$$
thus
$$\norm{\hat{T}_{n,N}-T_0}_{L^2(Q)}
\lesssim r_n^{-1/2}+N^{-1/3}.$$

\end{proof}

\begin{table}
\caption{Country abbreviations used in Figures \ref{fig:residuals-east} and \ref{fig:residuals-west}}
\begin{tabular}{ |p{3cm}||p{3cm}|  }
 \hline
 \multicolumn{2}{|c|}{Country List Figure (\ref{fig:residuals-east})} \\
 \hline
 Country Name     & Country Code\\
 \hline
Estonia&EST\\	
Slovakia&SVK\\	
Bulgaria&BGR\\
Hungary&HUN\\
Czechia&CZE\\
Lithuania&LTU\\	
East Germany&DEUTE\\
Latvia&LVA\\
Belarus&BLR\\
Ukraine&UKR\\
Israel&ISR\\
Slovenia&SVN\\	
Poland&POL\\	
Spain&ESP\\	
Italy&ITA\\	
Portugal&PRT\\	
Russia&RUS\\
Japan&JPN\\
Taiwan&TWN\\
Greece&GRC\\
 \hline
\end{tabular}
\quad
\begin{tabular}{ |p{3cm}||p{3cm}|  }
 \hline
 \multicolumn{2}{|c|}{Country List Figure  (\ref{fig:residuals-west})} \\
 \hline
 Country Name     & Country Code\\
 \hline
Australia&AUS	\\	
West Germany&DEUTW\\
Austria&AUT	\\	
Netherlands&NLD\\
Iceland&ISL\\	
Ireland&IRL\\	
Belgium&BEL\\			
France&FRATNP\\				
Finland&FIN\\
New Zealand	&NZL-NP\\
Switzerland&CHE\\	
Sweden&SWE\\	
Norway&NOR\\
United Kingdom&GBR-NP\\	
U.S.A.&USA\\
Denmark&DNK\\			
Luxemburg&LUX\\
\quad&\quad\\
\quad&\quad\\
\quad&\quad\\
 \hline
\end{tabular}
\label{country-list}
\end{table}

\bibliographystyle{imsart-nameyear}
\bibliography{paper}

\end{document}